\title{A Reputation for Honesty\footnote{We thank Mehmet Ekmekci and Navin Kartik for helpful comments, and National Science Foundation grants SES-1947021 and SES-1951056 for financial support.}}
\author{Drew Fudenberg\footnote{Department of Economics, Massachusetts Institute of Technology. Email: drewf@mit.edu} \and Ying Gao\footnote{Department of Economics, Massachusetts Institute of Technology. Email: yingggao@gmail.com} \and Harry Pei\footnote{Department of Economics, Northwestern University. Email: harrydp@northwestern.edu}}
\date{\today}

\documentclass[12pt]{article}
\usepackage{amsmath}
\usepackage{graphicx}
\usepackage{amsfonts}
\usepackage{amsthm}
\usepackage{setspace}
\usepackage{tikz}
\usepackage{amssymb}
\usetikzlibrary{patterns}
\usepackage{sgame}
\usepackage{color}
\usepackage{hyperref}
\usepackage{times}
\usepackage{enumitem}
\usepackage{mathptmx}
\usepackage{comment}
\usepackage[round]{natbib}
\bibliographystyle{plainnat}
\usepackage[bottom]{footmisc}
\usepackage[top=1in, bottom=1in, left=0.9in, right=0.9in]{geometry}

\begin{document}
\maketitle
\numberwithin{equation}{section}
\noindent We analyze situations in which players build reputations for honesty rather than for playing particular actions.
A patient player facing a sequence of short-run opponents makes an announcement about their intended action after observing an idiosyncratic shock, and  before players act. The patient player is either an honest type whose action coincides with their announcement, or an opportunistic type who can freely choose their actions. We show that the patient player can secure a high payoff by building a reputation for being honest
when the short-run players face uncertainty about which of the patient player's actions are currently feasible, but may receive a low payoff when there is no such uncertainty.  
\\


\begin{spacing}{1.5}

\newtheorem{Proposition}{\hskip\parindent\bf{Proposition}}
\newtheorem{Theorem}{\hskip\parindent\bf{Theorem}}
\newtheorem{Lemma}{\hskip\parindent\bf{Lemma}}
\newtheorem{Corollary}{\hskip\parindent\bf{Corollary}}
\newtheorem{Definition}{\hskip\parindent\bf{Definition}}
\newtheorem{Assumption}{\hskip\parindent\bf{Assumption}}
\newtheorem*{Condition}{\hskip\parindent\bf{Supermodularity Condition}}
\newtheorem{Claim}{\hskip\parindent\bf{Claim}}

Many economic actors have reputations for keeping or breaking their promises.   As a prominent example, Archibald Cox Jr's default on his promise of paying high bonuses in the early 90s triggered a massive defection of key personnel from  First Boston to its archrival Merrill Lynch.\footnote{ See ``Taking the Dare'' in The New Yorker, July 26th, 1993. The departed individuals include leaders of First Boston's prestigious energy group, and more than a dozen managing directors in its fixed-income and mortgage-backed-securities groups, triggered by a lower bonus payment than what they had been promised. Many who departed had been at First Boston their entire careers, including during its difficult times in the 80's.} 
Similar logic applies to advertising and marketing, which can set customers' expectations about the types of interactions they are going to have with the firm. If those expectations are not aligned with the actual customer experience, the firm's brand and business will suffer.

Motivated by these observations, we examine the reward for building  a  reputation for honesty. Compared to reputations for taking specific actions, a reputation for honesty can better adapt an agent's decisions to the current circumstances, which is valuable when the environment changes over time. Moreover, it is unrealistic to make commitments based on future contingencies that are hard to describe in advance, and the simplicity of a commitment to honesty makes it more plausible.


In our model, a patient player (e.g., a firm) faces a sequence of myopic opponents (e.g., consumers), each of whom  plays the game only once. Each period, before players act, the patient player privately observes an idiosyncratic shock,
which can affect their payoff 
(e.g., their production cost)
and which of their actions are currently feasible. Then the patient player
announces the action they intend to play. 
The myopic players cannot observe the shocks, but can observe the announcement in the current period as well as  whether the patient player has kept their word in the past.

The patient player is either an \textit{honest type}, who strategically chooses their announcements but always keeps their word, or an \textit{opportunistic type}, who strategically chooses both the announcements and the actions. Both types have the same payoff function. This contrasts to \cite{krepsWilson}, \cite{milgromRoberts}, and \cite{fl89} in which with positive probability, the patient player is a commitment type who mechanically plays a particular action.

Theorem \ref{Theorem1} shows that  the patient player receives at least their expected Stackelberg payoff in every equilibrium 
when the myopic players face a small amount of uncertainty about the actions currently available to the patient player.\footnote{In Section \ref{sec4}, we show that our reputation result extends when the patient player observes which of their actions are feasible after making their announcement, or when the patient player chooses an action (e.g., their effort), observes their product quality, and makes an announcement about quality before the myopic player chooses their action.} 
A complication  is that the opportunistic type may announce certain actions with higher probability than the honest type does, so the patient player's announcement may adversely affect their opponent's belief about their type. As a result, both types of the patient player may face a tradeoff between announcing actions 
that lead to higher credibility and announcing actions that lead to higher commitment payoffs (i.e., payoff conditional on being trusted).

To see why the reputation bound nevertheless obtains,  suppose the honest type  announces their Stackelberg action whenever it is feasible. When a myopic player does not best reply against the announcement, whether the patient player keeps their word in that period is informative about their type.
Because the set of feasible actions is stochastic,  the honest type announces each action with strictly positive probability, which implies that observing the current announcement leads to at most a bounded change in the myopic player's belief. 
Therefore, when the patient player behaves honestly, there can be at most a bounded number of periods in which the myopic players do not best reply to the announcement. As a result, 
the patient player receives at least their expected Stackelberg payoff.

By contrast, Theorem \ref{Theorem2} shows that when the patient player can choose from any of their possible actions in every period, there are equilibria in which they receive a low payoff, which can be as low as their minmax value in examples such as the product choice game.  

\paragraph{Related Literature:} Our paper contributes to the study of reputation models 
 where no types  are committed to specific actions.
\cite{schmidt} characterizes the Markov equilibria of  finite-horizon repeated bargaining games in which a firm has private information about its production cost. 
\cite{pei} characterizes an informed player's highest Nash equilibrium payoff when facing uninformed opponents. 
\cite{suwo} constructs a cooperative equilibrium  in a community enforcement model with a type that  communicates strategically but is committed to playing \textit{always defect}. By contrast, we provide a lower bound on the patient player's payoff for all Nash equilibria.

Our reputation result requires the uninformed players to face uncertainty about the availability of the informed player's actions, or more generally, believe that
the honest type 
makes every announcement with positive probability. This is related to \citet*{celentaniEtAl} and \cite{atakhanEkmekci},
which  show that full support monitoring can help reputation building when the uninformed player is long-lived. Their results, unlike ours,  require that
the informed player cannot perfectly observe the uninformed player's actions.

 \cite{jullienEtAl20} studies repeated buyer-seller games in which a seller privately observes their product quality, which is a noisy signal of their effort.
It shows that cheap talk communication about quality improves the maximum social welfare  if and only if the seller's cost of effort is intermediate.\footnote{
Jullien and Park (2014) shows that communication accelerates consumer learning when product quality is determined by the seller's type, and the high type seller is non-strategic and always tells the truth. 
Awaya and Krishna (2016) identifies a class of games in which players can achieve  perfectly collusive payoffs with  communication, but not  without it.} 
 Our paper examines a complementary question, namely, whether a patient player can guarantee high payoffs in \textit{all} equilibria by building reputations for honesty.
Successful reputation building in our model  
requires uncertainty about the actions available to the patient player, 
but does not depend on the players' payoff functions. Corollary \ref{Cor2} in Section 4 extends our insights to Jullien and Park (2020)'s setting, which implies that a patient seller receives their optimal commitment payoff in all equilibria when product quality (i.e., the seller's private signal) is a noisy signal of effort, but receives a payoff lower than that in some equilibria when quality is a perfect signal of effort.

The fact that many people prefer to be honest has been established experimentally by e.g.  \cite{gneezy05} and \citet*{gneezyEtAl18}. \citet*{kartikEtAl07} and \cite{kartik09} show how  costs of lying
change the equilibrium outcomes of strategic communication games.
Instead of positing that some players have  a cost of lying, we follow \citet*{CKS08} and \cite{C11} and assume that the patient player is either an honest type who never lies, or an opportunistic type who  faces no cost of lying. 
Our results extend to cases with strictly positive and possibly heterogeneous lying costs. 

Our work is related to the literature on pre-play communication. 
Including an honest type in our model is in line with the experimental finding of \cite{charness} that some people keep their word in order to live up to others’ expectations.
\cite{Sobel} allows one of the players to communicate their intended action before playing a two-player complete information game, and provides sufficient conditions under which the sender receives their highest Nash equilibrium payoff.

\section{Example: Product Choice Game with Stochastic Cost}\label{sub2.1}
Consider a game between a firm (row player) with discount factor $\delta \in (0,1)$ and a sequence of consumers (column player), each of whom plays the game only once. In every period, the firm privately observes its i.i.d. cost of production $\theta_t \in \{\theta_g,\theta_b\}$. 
Let $p_g \in (0,1)$ be the probability that
$\theta_t=\theta_g$. The players' stage-game payoffs are:
\begin{center}
\begin{tabular}{| c | c | c |}
  \hline
  $\theta=\theta_g$ & $T$ & $N$ \\
  \hline
  $H$ & $1,2$ & $-1,0$ \\
  \hline
  $L$ & $2,-2$ & $0,0$ \\
  \hline
\end{tabular}
\quad
\begin{tabular}{| c | c | c |}
  \hline
  $\theta=\theta_b$ & $T$ & $N$ \\
  \hline
  $H$ & $-1,2$ & $-3,0$ \\
  \hline
  $L$ & $2,-2$ & $0,0$ \\
  \hline
\end{tabular}
\end{center}
When $\theta=\theta_g$, the best pure-strategy commitment for the firm is  to action $H$, which yields  payoff $1$.\footnote{A commitment to a mixed strategy is even better in state $\theta_g$. 
We do not consider reputations for playing mixed actions in this paper.} 
When $\theta=\theta_b$, the firm's optimal commitment action is $L$, which yields  payoff $0$.
If the firm obtains its optimal pure-strategy  commitment payoff
in every state, then its expected payoff is $p_g$.

\paragraph{No Announcement Benchmark:} Suppose the firm cannot 
make announcements about its intended actions, and that 
with small but positive probability it is a commitment type that mechanically plays $H$ in every period. 
Future consumers can observe the firm's effort in previous periods, but not the past realizations of $\theta_t$.\footnote{This is a reasonable assumption given that $\theta$ only affects the firm's cost of supplying high quality.}  Then, there are  equilibria in which the patient firm's payoff is $\max\{0, 2p_g-1\}$, which is strictly lower than $p_g$. For example, when $p_g \geq 1/2$, there is an equilibrium where the  opportunistic firm chooses $H$ in every period on the equilibrium path, and each consumer chooses $T$ unless they observe $L$ in at least one of the previous periods. 
Intuitively, when $\theta=\theta_b$, the cost of playing $H$ outweighs the benefit from the consumer's trust, and 
the firm faces a tradeoff between sustaining its reputation for playing $H$ and avoiding the excessive cost.

This low-payoff equilibrium motivates our interest in reputations for honesty.

\paragraph{Reputation for Honesty:} Suppose that the firm can make an announcement $m_t$ about its intended action $a_t$ to the current consumer after observing $\theta_t$, but before players choosing actions.

The firm is either honest or opportunistic. In contrast to the commitment types in canonical reputation models, the honest type is strategic when making announcements and does not commit to any particular action. Instead, it commits to play the action it announces in every period. The two  types of the firm have the same stage-game payoff function and discount factor, and maximize their respective discounted average payoffs.
The consumer in period $t$ observes the firm's announcement in period $t$, as well as the value of $\mathbf{1}\{a_s=m_s\}$ for $s \in \{0,1,...,t-1\}$, i.e., 
whether the firm's announcements matched its actions in the previous periods.

As Theorem 2 shows,  the firm's equilibrium payoff can be low when all of its actions are always available. To see how this works in the example, consider the following strategy profile: Both types of the firm announce $L$ and play $L$ at every  history, and each consumer plays $N$ regardless of the firm's announcement. 
The consumers' belief about the firm's type never changes on the equilibrium path. 
After the firm announces $H$, 
the current consumer believes 
that the firm is opportunistic and will play $L$.\footnote{When future consumers only observe whether $a_t$ coincides with $m_t$, but not the exact realizations of $a_t$ and $m_t$, they do not  observe deviations in the announcement stage if the firm kept its word. We show that the firm can also receive a low payoff when future consumers can observe both $a_t$ and $m_t$.} 
This strategy profile and assessment constitute a Perfect Bayesian equilibrium, in which the firm's discounted average payoff is $0$ regardless of its type.

This low-payoff equilibrium is driven by the honest-type firm's strategic concerns when making announcements. The consumers believe that the opportunistic type is more likely to announce $H$, so the honest type faces a trade-off in state $\theta_g$ between announcing an action that leads to higher credibility (i.e., action $L$) and an action that leads to a higher commitment payoff (i.e., action $H$). This motivates the honest type to announce $L$, making consumers' beliefs self-fulfilling.


In contrast, Theorem 1 shows that when some of the firm's actions are unavailable with small but positive probability, both types of the firm 
receive at least their expected Stackelberg payoff in every equilibrium. 
For example, suppose in every period the firm can choose between $H$ and $L$ with probability $1-2\varepsilon$, can only choose $H$ with probability $\varepsilon$, and can only choose $L$ with probability $\varepsilon$. 
Both types of the patient firm receive payoff at least 
$(1-\varepsilon) p_g - 7 \varepsilon (1-p_g)$
when the feasibility of actions is i.i.d. over time and is independent of $\theta$.
This guaranteed payoff converges to $p_g$ as 
 $\varepsilon \rightarrow 0$.

\paragraph{Remarks:} Our assumption that the consumers face uncertainty about which of the firm's actions are feasible fits situations in which the firm is a single contractor who  occasionally may be sick, and so unable to provide high-quality service. 
It also fits cases where the firm faces occasional regulatory inspections, and producing low-quality products in those periods can lead to fines and the risk of being shut down. In this situation, the firm will always choose to supply high quality, regardless of their discount factor.\footnote{Formally, if the firm chooses $L$ when it is inspected, it
faces a fine $f >0$ and a probability $q \in (0,1)$ of shutting down. One can show that there exists $\underline{f}>0$ and $\underline{q} \in (0,1)$ such that when $f> \underline{f}$ and $q > \underline{q}$, it is a dominant strategy for both types of the firm to choose $a_t=H$ regardless of their discount factors and the equilibrium being played.}

Our reputation result (Theorem \ref{Theorem1}) extends to cases where the distribution of the feasible actions varies exogenously over time or is correlated with the current $\theta$. 
It also extends when $\theta_t$ is drawn from a potentially different set $\Theta_t$ in every period, 
as long as the patient player's payoff is uniformly bounded for all $t \in \mathbb{N}$. This captures 
situations in which
the client's demand varies over time and which of the firm's action benefits the client is known only after the client arrives.

In these situations, 
the advantage of establishing a reputation for honesty is more pronounced. Due to the complicated nature of future payoff environments, it is impractical for the firm to commit to state-contingent action plans. A reputation for honesty  allows the firm to communicate its intended actions after observing the payoff environment, which sidesteps these complications.

\section{Baseline Model}\label{sec2}
Time is discrete, indexed by $t=0,1...$. A long-lived player $1$ (e.g., a seller) with discount factor $\delta$ interacts with an infinite sequence of short-lived player $2$s (e.g., consumers), with $2_t$ denoting the short-lived player in period $t$.
 Player $1$'s action set is $A$ and player $2$'s action set is $B$.

Each period consists of an announcement stage and an action stage.
In period $t$, an i.i.d. random variable $(\theta_t,\omega_t) \in \Theta \times \Omega$ is drawn according to $p \in \Delta (\Theta \times \Omega)$, where $\theta_t  \in \Theta$ affects player $1$'s stage-game payoff (e.g., their cost of supplying high quality) and
$\omega_t \subset A$ is the set of feasible actions, with 
$\Omega \equiv 2^A \backslash \{\varnothing\}$.
Player $1$ privately observes $(\theta_t,\omega_t)$ and 
announces to player $2_t$
that they intend to play action $m_t \in A$.
Players then  simultaneously choose their actions $a_t \in \omega_t$ and $b_t \in B$.

Player $1$'s stage-game payoff is $u_1(\theta_t,a_t,b_t)$ and player $2_t$'s is $u_2(a_t,b_t)$. Importantly, player $2$'s payoff does not depend on $\theta_t$. 
Each player $2$ who arrives after period $t$ can observe $y_t \in Y$, distributed according to $F(\cdot|a_t,m_t)$. A leading example is when 
$y_t$ is the indicator function $\mathbf{1}\{a_t=m_t\}$, that is, future short-run players observe whether the patient  player has kept their word in the past.

Player $1$ has  private information about their type $\gamma \in \{\gamma_h,\gamma_o\}$, which is either \textit{honest} ($\gamma_h$) or \textit{opportunistic} ($\gamma_o$). Both types share the same stage-game payoff function. 
The honest type is restricted (i) to announce an action that is currently available, i.e., $m_t \in \omega_t$, and (ii) to take an action that matches their announcement, i.e., $a_t=m_t$.
The opportunistic type can announce any action (including ones that are not feasible that period)
and can take any action in $\omega_t$ regardless of their announcement. 
Let $\pi_0 \in (0,1)$ be the prior probability of the honest type according to player $2$s' prior belief.

For every $t \in \mathbb{N}$, 
player $2_t$'s private history is $h_2^t \equiv \{y_0,y_1,...,y_{t-1},m_t\}$, with $h_2^t \in \mathcal{H}_2^t$.
Player $2_t$'s strategy is $\sigma_2^t : \mathcal{H}_2^t \rightarrow \Delta (B)$, with $\sigma_2 \equiv (\sigma_2^t)_{t \in \mathbb{N}}$.
Player $1$'s private history in the announcement stage of period $t$ is
\begin{equation*}
    \widehat{h}_1^t \equiv \{\theta_s,\omega_s,m_s,a_s,b_s,y_s\}_{s=0}^{t-1} \bigcup \{\gamma,\omega_t,\theta_t\},
\end{equation*}
with $\widehat{h}_1^t \in \widehat{\mathcal{H}}_1^t$ and $\widehat{\mathcal{H}}_1 \equiv \bigcup_{t=0}^{\infty} \widehat{\mathcal{H}}_1^t$.
Player $1$'s private history in the action stage of period $t$ is
\begin{equation*}
    \widetilde{h}_1^t \equiv \{\theta_s,\omega_s,m_s,a_s,b_s,y_s\}_{s=0}^{t-1} \bigcup \{\gamma,\omega_t,\theta_t,m_t\},
\end{equation*}
with $\widetilde{h}_1^t \in \widetilde{\mathcal{H}}_1^t$ and $\widetilde{\mathcal{H}}_1 \equiv \bigcup_{t=0}^{\infty} \widetilde{\mathcal{H}}_1^t$.
The opportunistic type's strategy is $\sigma_o \equiv (\widehat{\sigma}_{o},\widetilde{\sigma}_o)$, with $\widehat{\sigma}_{o}: \widehat{\mathcal{H}}_1 \rightarrow \Delta (A)$ their strategy to make announcements and
$\widetilde{\sigma}_{o}: \widetilde{\mathcal{H}}_1 \rightarrow \Delta (A)$ their strategy to take actions, subject to a feasibility constraint that the support of $\widetilde{\sigma}_o (\widetilde{h}_1^t)$ is a subset of $\omega_t$.
The honest type's strategy is $\sigma_h \equiv (\widehat{\sigma}_{h},\widetilde{\sigma}_h)$, with $\widehat{\sigma}_{h}: \widehat{\mathcal{H}}_1 \rightarrow \Delta (A)$ their strategy to make announcements and
$\widetilde{\sigma}_{h}: \widetilde{\mathcal{H}}_1 \rightarrow \Delta (A)$ their strategy to take actions, subject to first,
the support of $\widehat{\sigma}_h (\widehat{h}_1^t)$ is a subset of $\omega_t$, and second, their action matches their announcement 
$\widetilde{\sigma}_h (\widetilde{h}_1^t)=m_t$.

A Nash equilibrium (NE) consists of $(\sigma_o,\sigma_h,\sigma_2)$, in which $\sigma_2^t$ maximizes player $2_t$'s stage-game payoff, and every type of player $1$ chooses a strategy that maximizes their discounted average payoff 
$\mathbb{E}\Big[\sum_{t=0}^{\infty} (1-\delta)\delta^t  u_1(\theta_t,a_t,b_t) \Big]$. 
We assume that $\Theta$, $A$, $B$, and $Y$ are finite sets, which together with discounting  of per period payoffs implies that a  Nash equilibrium exists (Fudenberg and Levine 1983). 

\section{Results}\label{sec3}
We show that when the short-run players face a small amount of uncertainty about the feasibility of the patient player's actions, and $y_t$ is informative about whether the patient player has kept their word in period $t$, 
the patient player can secure their expected (pure) Stackelberg payoff in every equilibrium.\footnote{In what follows, we will simply say \textit{Stackelberg action} and \textit{Stackelberg payoff}, with ``pure'' left implicit. }
By contrast, 
the patient player receives a low payoff in some equilibria when all of their actions are feasible in every period.

Recall that $\omega_t \subset A$ is the set of feasible actions in period $t$. 
For every $\varepsilon>0$, we say that player $1$'s action choice is \textit{$\varepsilon$-flexible} if the probability with which $\omega_t=A$ is at least $1-\varepsilon$. 

\begin{Assumption}\label{Ass1}
For every $a \in A$, $\omega_t=\{a\}$ with strictly positive probability. 
\end{Assumption}
Our next assumption requires $y_t$ to be informative about whether player $1$'s action and announcement match. A leading example that satisfies this assumption is
 $y_t =\mathbf{1}\{a_t=m_t\}$.
\begin{Assumption}\label{Ass2}
 If $a=m$ and $a'=m'$, then (i) $F(\cdot|a,m)=F(\cdot|a',m')$, and (ii) $F(\cdot|a,m)$ does not belong to the convex hull of $\{F(\cdot|a'',m'')\}_{a'' \neq m''}$. 
\end{Assumption}
Let $\textrm{BR}_2: \Delta (A) \rightarrow 2^{B} \backslash \{\varnothing\}$ be player $2$'s best reply correspondence. 
In state $\theta \in \Theta$,
player $1$'s Stackelberg payoff is
\begin{equation*}
    v_1^*(\theta) \equiv \max_{a \in A} \Big\{ \min_{b \in \textrm{BR}_2(a)} u_1 (\theta,a,b) \Big\},
\end{equation*}
and their expected Stackelberg payoff is 
 $v_1^* \equiv \sum_{\theta \in \Theta} p(\theta) v_1^*(\theta)$.
\begin{Theorem}\label{Theorem1}
Suppose the environment satisfies Assumptions \ref{Ass1} and \ref{Ass2}. 
For every $\eta>0$, there exist $\underline{\delta} \in (0,1)$ and $\varepsilon>0$ such that when $\delta > \underline{\delta}$ and player $1$'s action choice is $\varepsilon$-flexible, each type of player $1$ receives payoff at least $v_1^*-\eta$ in every Nash equilibrium.\footnote{When $y_t=\mathbf{1}\{a_t=m_t\}$, a patient player $1$ can guarantee payoff approximately $v_1^*$ in every weak rationalizable outcome defined in \cite{watson} in the perturbed game where player $1$ is honest with positive probability.} 
\end{Theorem}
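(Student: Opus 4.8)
\emph{The plan.} I would prove the bound for \emph{both} types simultaneously by exhibiting, for each type, one deviation that secures approximately $v_1^*$, and the deviation is morally the same: \emph{in every period announce the Stackelberg action $a^*(\theta_t)$ of the realized state whenever it is feasible (and an arbitrary feasible action otherwise), and always keep one's word, $a_t=m_t$}. This is a feasible strategy for the opportunistic type; and since the honest type must keep its word anyway, ``announce $a^*(\theta_t)$ when feasible'' is a feasible \emph{honest} strategy as well. So in any Nash equilibrium each type's payoff is at least what it gets from this deviation against the equilibrium $\sigma_2$ (the other type's equilibrium strategy matters only through player $2$s' beliefs). Let $\mathbb{P}$ be the distribution over histories induced by this deviation.

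\emph{Step 1: reduce the payoff bound to counting ``untrusting'' periods.} Since $A$, $B$ are finite and $\textrm{BR}_2$ is upper hemicontinuous, there is $\xi>0$ such that whenever player $2_t$'s posterior puts probability at least $1-\xi$ on $a_t=m_t$, all her best replies lie in $\textrm{BR}_2(m_t)$. Call period $t$ \emph{bad} if $2_t$ does not best reply to $m_t$; then in a bad period her posterior on $a_t=m_t$ is at most $1-\xi$. Writing $F^*:=F(\cdot\mid a,a)$, well-defined by Assumption~\ref{Ass2}(i), her forecast of $y_t$ equals $\chi_t F^*+(1-\chi_t)\bar F_t$ with $\chi_t\le 1-\xi$ her belief that $a_t=m_t$ and $\bar F_t$ in the convex hull of $\{F(\cdot\mid a',m'):a'\neq m'\}$; by Assumption~\ref{Ass2}(ii) this hull is at total-variation distance at least some $\kappa>0$ from $F^*$, so in a bad period her forecast of $y_t$ is at least $\xi\kappa$ away from $F^*$. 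But under $\mathbb{P}$ player $1$ always keeps her word, so the \emph{true} conditional law of $y_t$ given the public history is exactly $F^*$. Hence every bad period is one in which $2_t$ makes a forecast error of size at least $\xi\kappa$, and it suffices to bound the $\mathbb{P}$-expected number of such periods by a constant $\overline N$ that does not depend on $\delta$.

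\emph{Step 2: the forecast-error count is bounded.} Two facts drive this. First, $2_t$'s prior puts probability $\pi_0>0$ on the honest type, who also always keeps her word and so also generates $y_t\sim F^*$ every period: this ``grain of truth'' is what keeps a Bayesian from being surprised forever. Second — and here Assumption~\ref{Ass1} enters — the honest type is forced to announce $a$ whenever $\omega_t=\{a\}$, an event of probability at least $\underline\rho:=\min_a\Pr(\omega_t=\{a\})>0$; hence the honest type announces every action with probability at least $\underline\rho$, so observing $m_t$ moves the log-likelihood ratio between the two types \emph{against} the honest type by at most $\ln(1/\underline\rho)$ (it can move arbitrarily far \emph{toward} her). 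Decomposing $2_t$'s type-belief update each period into an announcement step and a word-kept step: by Pinsker's inequality and the $\xi\kappa$ gap from Step~1, every bad period contributes a uniformly positive expected increment toward the honest type in the word-kept step; and a large \emph{adverse} announcement step can occur only when the equilibrium honest type essentially never announces $m_t$ voluntarily, which by the honest type's own optimality forces the equilibrium opportunistic type to break its word after $m_t$ sufficiently often, which in turn \emph{strengthens} the word-kept step in that same period. Combining these via a relative-entropy / bounded-martingale argument yields the desired $\delta$-free bound $\overline N$. Making this reconciliation precise — showing the bounded-but-adverse announcement updates cannot cancel the favorable word-kept updates — is the step I expect to be the main obstacle.

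\emph{Step 3: assemble.} Let $\bar u_1:=\max|u_1|$. Under $\mathbb{P}$, in every period in which $a^*(\theta_t)$ is feasible and $2_t$ best replies to $m_t=a^*(\theta_t)$, player $1$ obtains at least $v_1^*(\theta_t)$; the first condition fails with probability at most $\varepsilon$ per period, and the second fails in at most $\overline N$ periods in expectation, which in the discounted sum cost at most $2\bar u_1\varepsilon$ and $2\bar u_1(1-\delta)\overline N$ respectively (using that $\theta_t$ is i.i.d., so $\mathbb{E}[v_1^*(\theta_t)]=v_1^*$). Hence player $1$'s discounted payoff under the deviation is at least $v_1^*-2\bar u_1\varepsilon-2\bar u_1(1-\delta)\overline N$. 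Given $\eta>0$, first choose $\varepsilon$ with $2\bar u_1\varepsilon<\eta/2$, then choose $\underline\delta$ with $2\bar u_1(1-\delta)\overline N<\eta/2$ for all $\delta>\underline\delta$ (possible since $\overline N$ does not depend on $\delta$); this gives each type of player $1$ a payoff of at least $v_1^*-\eta$ in every Nash equilibrium.
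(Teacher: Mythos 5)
Your architecture --- deviate to ``announce $a^*(\theta_t)$ when feasible and always keep your word,'' reduce the payoff bound to counting periods in which player $2_t$ distrusts the announcement, and bound that count by a grain-of-truth / relative-entropy argument --- is exactly the paper's. The gap is the step you yourself flag as the main obstacle in Step 2: the worry that adverse ``announcement updates'' to the type belief might cancel the favorable ``word-kept updates,'' together with the heuristic you offer for ruling this out (that the honest type's optimality would force the opportunistic type to break its word after a distrusted announcement). That heuristic is not a proof, and in fact the obstacle does not arise in this model, so the machinery you propose to fight it is misdirected: future short-run players observe only $\{y_0,\dots,y_{t-1}\}$, never past announcements, so the type-belief process $\pi_t$ is a martingale on the $y$-filtration alone; the announcement $m_t$ affects only player $2_t$'s within-period conditional belief and leaves no trace in $\pi_{t+1}$. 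There is nothing to cancel.

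The correct use of Assumption \ref{Ass1} is different from yours, and it also repairs a conflation in your Step 1. The quantity that enters the relative-entropy telescoping is the forecast of $y_t$ given $\{y_0,\dots,y_{t-1}\}$ \emph{marginalized over $m_t$} (one cannot include $m_t$ in the merging argument, since the deviation changes the announcement distribution and would then break the identity $P^{\sigma_h^*,\sigma_2}=P^{\sigma_h,\sigma_2}$), whereas a ``bad period'' is defined by the belief \emph{conditional} on the particular announcement $m_t=a^*(\theta_t)$; your $\chi_t\le 1-\xi$ is that conditional belief, not the marginal forecast weight. The bridge is that the honest type must announce $a$ whenever $\omega_t=\{a\}$, so every announcement has probability at least $\underline{\rho}>0$; hence if the match probability conditional on some $m_t$ is below $\overline{\lambda}$, the unconditional match probability is below $\xi^*\equiv 1-(1-\overline{\lambda})\underline{\rho}$, and by Assumption \ref{Ass2}(ii) that period's KL contribution is at least $D^*\equiv\min_{a\ne m}d(F^*\|\xi^* F^*+(1-\xi^*)F(\cdot|a,m))>0$. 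Combined with the fact (from Assumption \ref{Ass2}(i)) that the deviation induces the same $y$-distribution as the honest type's equilibrium strategy --- so the total expected KL sum is at most $-\log\pi_0$ --- this yields the $\delta$-free bound $\lceil -\log\pi_0/D^*\rceil$ on the expected number of bad periods, and your Step 3 then goes through unchanged. So the theorem is provable by essentially your route, but Step 2 as written does not close; replace the announcement-versus-word-kept reconciliation with this $\underline{\rho}$-dilution argument.
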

The proof is in Appendix \ref{secA}. 
For some intuition, consider the example in which $y_t=\mathbf{1}\{a_t=m_t\}$.
Let $a^*: \Theta \rightarrow A$ be any mapping such that 
$a^*(\theta) \in \arg\max_{a \in A} \Big\{ \min_{b \in \textrm{BR}_2(a)} u_1 (\theta,a,b) \Big\}$
for every $\theta$.

Fix any equilibrium, and consider the honest type's payoff when they announce $a^*(\theta_t)$ in period $t$ whenever $a^*(\theta_t) \in \omega_t$.
The second part of Assumption \ref{Ass2} implies that whether player $1$'s action coincides with their announcement is informative about their type in the ``bad'' periods where player $2$ fails to best reply to the announcement. Assumption \ref{Ass1} requires that for each $a \in A$, $\omega_t=\{a\}$ with positive probability, which implies that the honest type makes each announcement with positive probability in every period.\footnote{Our reputation result extends when $\omega_t$ is the set of feasible announcements. When the honest type trembles and makes each announcement with positive probability, our result also extends to settings where player $1$ makes their announcement \textit{before} knowing which of their actions are feasible. See Corollary \ref{Cor3} for details.} 
The first part of Assumption \ref{Ass2} guarantees that the honest type's deviation generates the same distribution of player $2$'s histories as the honest type's equilibrium strategy. 

If player $2_t$ fails to best respond to the announced action, player $2$ must assign  a significant probability to  the event that player $1$ is opportunistic and takes $a_t \neq m_t$. Hence, observing $a_t=m_t$ should increase the posterior probability with which player $1$ is honest. Thus, there exists at most a bounded number of periods where player $2$ believes that player $1$ chooses $a_t \neq m_t$ with significant probability. In other words, player $2$ must believe that player $1$ is honest or opportunistic but keeps their word most of the time.


Theorem \ref{Theorem1} requires that the patient player 
can choose any action in $A$ with probability close to $1$. In fact, as long as the environment satisfies Assumptions \ref{Ass1} and \ref{Ass2}, one can establish a reputation result using the same argument as the proof of Theorem \ref{Theorem1} after adjusting the definition of expected Stackelberg payoff. For every $(\theta,\omega) \in \Theta \times \Omega$, let
\begin{equation}
    u_1^{*}(\theta,\omega) \equiv \max_{a \in \omega} \Big\{ \min_{b \in \textrm{BR}_2(a)} u_1 (\theta,a,b) \Big\},
\end{equation}
and let 
\begin{equation*}
    u_1^{*} \equiv \sum_{(\theta,\omega) \in \Theta \times \Omega} p(\theta,\omega) u_1^{*}(\theta,\omega).
\end{equation*}
One can show that when the environment satisfies Assumptions \ref{Ass1} and \ref{Ass2}, a patient player can guarantee payoff $u_1^{*}$ in all equilibria when $\delta$ is close enough to $1$.

Assumption \ref{Ass2} rules out situations in which player $2$s can perfectly observe player $1$'s past announcements, or more generally can observe signals that statistically identify the content of player $1$'s past announcements even when $a_t=m_t$. To study such situations, we extend our result when player $2$s can observe informative signals $z_t$ about the past realizations of $a_t$ and $m_t$, as long as each of them can only observe the realizations of $z_t$ in a bounded number of previous periods.

Formally, let $z_t \in Z$, where $z_t$ is distributed according to
$G(\cdot|m_t,a_t)$. We make no restrictions on $G$ except that its support $Z$ is a finite set. Suppose for every $t \in \mathbb{N}$, player $2_t$ can observe player $1$'s announcement $m_t$, the history  $\{y_0,...,y_{t-1}\}$, as well as a  (possibly stochastic) subset of $\{z_0,...,z_{t-1}\}$ that has at most $K \in \mathbb{N}$ elements. 
Whether player $1$ can observe  $y$ and $z$ are irrelevant for our result.

Our assumption on the asymmetry between player $2$s' observations of $y_t$ and $z_t$ is motivated by retail markets in developing economies. Due to the lack-of record-keeping institutions, detailed information about sellers' actions and announcements (e.g., the quality of their services, various attributes of their products, the content of their advertisements, and so on, which correspond to $z_t$) is likely to get lost over time. By contrast,  coarse information about sellers' records, such as whether they have kept their word (which corresponds to $y_t$), is likely to be more persistent due to  social learning and word-of-mouth communication.
Corollary \ref{Cor1} extends Theorem \ref{Theorem1}, with proof in Appendix C. 
\begin{Corollary}\label{Cor1}
Suppose the environment satisfies Assumptions \ref{Ass1} and \ref{Ass2}, and there exists $K \in \mathbb{N}$ such that each player $2$ observes the past realizations of $z$ in at most $K$ periods.
For every $\eta>0$, there exist $\underline{\delta} \in (0,1)$ and $\varepsilon>0$ such that when $\delta > \underline{\delta}$ and player $1$'s action choice is $\varepsilon$-flexible, each type of player $1$ has  payoff at least $v_1^*-\eta$ in every Nash equilibrium. 
\end{Corollary}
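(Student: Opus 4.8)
The plan is to reduce the corollary to the proof of Theorem \ref{Theorem1}, treating the transient signals $z$ as a bounded perturbation of what each short-run player observes. Fix a Nash equilibrium and the Stackelberg selection $a^*$ from the proof of Theorem \ref{Theorem1}, and consider the honest type's deviation $\widehat\sigma_h'$ that announces $a^*(\theta_t)$ whenever $a^*(\theta_t)\in\omega_t$ (and any feasible action otherwise) and, being honest, plays $a_t=m_t$ in every period. Call period $t$ \emph{bad} if player $2_t$ does not best reply to $a^*(\theta_t)$. Exactly as in Theorem \ref{Theorem1}, the honest type's discounted payoff under $\widehat\sigma_h'$ is at least $v_1^*-2\bar u\big[(1-\delta)\,\mathbb{E}[\text{number of bad periods}]+\varepsilon\big]$ with $\bar u=\max|u_1|$, so it suffices to bound the expected number of bad periods by a constant $N$ depending only on $K$, $\pi_0$, and the payoff/signal primitives, and then take $\varepsilon$ small and $\underline\delta$ close to $1$; the opportunistic type's bound follows by having it imitate $\widehat\sigma_h'$ (playing $a_t=m_t$ whenever $a^*(\theta_t)$ is feasible), which is a feasible deviation.

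Recall the three ingredients behind Theorem \ref{Theorem1}'s bound on the expected number of bad periods. First, by Assumption \ref{Ass2}(ii) and finiteness, in a bad period the opportunistic type must break its word with probability at least a constant $\alpha>0$ that depends only on the primitives, so player $2_t$'s prediction of $y_t$ is a fixed positive distance from the word-kept distribution $F^*\equiv F(\cdot|a,a)$. Second, by Assumption \ref{Ass1} the honest type announces every action with probability at least $\underline p\equiv\min_a\Pr(\omega_t=\{a\})>0$, so conditioning on the current announcement $m_t$ changes the posterior odds on the opportunistic type by at most a factor $1/\underline p$. Third, by Assumption \ref{Ass2}(i) the distribution $F^*$ does not depend on the announcement, so whenever the honest type keeps its word the word-kept record $(y_0,\dots,y_{t-1})$ that player $2_t$ carries forward has the same distribution under $\widehat\sigma_h'$ as in equilibrium; this lets one run the accounting on a bounded potential driven by that record, with each bad period moving the potential by a fixed positive amount.

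The only ingredient the $z$-signals disturb is the third: since under $\widehat\sigma_h'$ the honest type's past announcements differ from equilibrium, and $z_s$ depends on $m_s$, the block $z_{S_t}$ that player $2_t$ sees is no longer distributed as in equilibrium, and the observation structure across periods is not even an increasing filtration. I would control this through the hypothesis $|S_t|\le K$. The key lemma: conditional on player $1$ being honest and playing $\widehat\sigma_h'$, the $z$'s are independent of the word-kept record and of the current announcement (with the word always kept, those carry no information about the $z$'s), and for any realized $z_s$ the action $a$ the honest type announced satisfies $z_s\in\mathrm{supp}\,G(\cdot\,|\,a,a)$; forcing that announcement through the event $\omega_s=\{a\}$ (Assumption \ref{Ass1}) and using finiteness of $Z$ gives $\Pr(z_s\mid\text{honest})\ge\underline q$ for a primitive $\underline q>0$, hence $\Pr\big(z_{S_t}\mid\text{honest},\ \text{the rest of player }2_t\text{'s history}\big)\ge\underline q^{K}$. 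Consequently the extra observation $z_{S_t}$ multiplies player $2_t$'s posterior odds on the opportunistic type by a factor in $[\underline q^{K},\underline q^{-K}]$, uniformly in $t$, $\delta$, $\varepsilon$, and the equilibrium.

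Given the lemma, I would re-run Theorem \ref{Theorem1}'s accounting with $z$-corrected constants. The first ingredient is untouched: a bad period still forces word-breaking with probability at least $\alpha$. The second ingredient's factor becomes $1/(\underline p\,\underline q^{K})$. The potential is still the word-kept-record potential of Theorem \ref{Theorem1}, whose distribution under the honest type is unchanged by the deviation (Assumption \ref{Ass2}(i)); and in a bad period, correcting player $2_t$'s posterior back through the bounded $z$-factor and the bounded $m_t$-factor keeps the relevant belief on the honest type bounded away from $1$, so the realized word-kept outcome still moves the potential by at least a constant $c=c(\alpha,K,\text{primitives})>0$, yielding $\mathbb{E}[\text{number of bad periods}]\le N$ with $N$ independent of $\delta$, $\varepsilon$, and the equilibrium. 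I expect the bookkeeping in this last step to be the main obstacle: because the $z$-observations do not form an increasing filtration one cannot simply invoke a martingale-variation bound, and one must argue that the once-per-period, size-$\underline q^{-K}$ distortion does not compound across periods — which is precisely where the hypothesis that each player $2$ observes at most $K$, rather than a growing number, of past realizations of $z$ is used.
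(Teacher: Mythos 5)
Your proposal is correct and follows essentially the same route as the paper's proof: run the entropy/merging accounting of Theorem \ref{Theorem1} on the cumulative $y$-record alone, and use Assumption \ref{Ass1} (each $\omega_s=\{a\}$ with probability at least $\underline{\rho}$, plus finiteness of $Z$) to bound the distortion from the at-most-$K$ transient $z$-observations by a uniform multiplicative factor, which the paper absorbs into the corrected threshold $\widehat{\xi}=1-\underline{\rho}^K(1-\xi^*)$ and divergence constant $\widehat{D}$. Your closing concern about the $z$-observations not forming a filtration is resolved exactly as the paper does it --- the martingale argument never touches the $z$'s, which enter only as a once-per-period bounded correction.
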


Now we show why uncertainty about which of player 1's actions are feasible
is necessary for Theorem 1 to hold in general. We show this for situations in which $\omega_t=A$ with probability $1$ and $y_t=\mathbf{1}\{a_t=m_t\}$.\footnote{We can construct low-payoff equilibria when $\omega_t=A$ with probability $1$ and player $2_t$ can perfectly observe $\{a_s,m_s\}_{s=0}^{t-1}$.
We do not know how to construct low-payoff equilibria when some actions are not available with positive probability and player $2_t$ can perfectly observe $\{a_s,m_s\}_{s=0}^{t-1}$.}

We start by introducing two auxiliary one-shot games that have the same payoff functions as the original stage  game. The first auxiliary game does not have a communication stage: Player $1$ observes $\theta$, and then players act simultaneously without any communication.
Let $v_1^{min}$
be player $1$'s lowest  Nash equilibrium payoff in this game. The second auxiliary game has an  action recommendation stage: Player $1$ observes $\theta$, makes a  recommendation $\widehat{b} \in B$ to player $2$ before players take their actions.
Let $\widehat{v}_1$ be player $1$'s lowest pure-strategy equilibrium payoff in this game. If there is no pure-strategy equilibrium in this game, let $\widehat{v}_1 =+\infty$.

Let $\mathcal{B}$ be the set of mappings $\beta: A \rightarrow \Delta (B)$ such that 
$\beta(a)$ is a best reply to 
$a$ for every $a \in A$. Abusing notation, 
let $p$ be the distribution of $\theta$. Let 
\begin{equation}\label{5.2}
  v_1' \equiv  \min_{A' \subset A, \beta \in \mathcal{B}} 
    \sum_{\theta \in \Theta} 
    p(\theta) 
    \max_{a \in A'} u_1(\theta,a,\beta(a))
\end{equation}
subject to
\begin{equation}\label{5.3}
    \sum_{\theta \in \Theta} 
    p(\theta) 
    \max_{a \in A'} u_1(\theta,a,\beta(a))
    \geq \min \{ v_1^{min}, \widehat{v}_1\}. 
\end{equation}
Theorem \ref{Theorem2} shows that when all of player $1$'s actions are feasible in every period, there are  equilibria in which both types of player $1$ have  payoff no more than $v_1'$,
\begin{Theorem}\label{Theorem2}
If $\omega_t=\{A\}$ with probability $1$ and $y_t=\mathbf{1}\{a_t=m_t\}$, then there exists $\underline{\delta} \in (0,1)$ such that for every $\delta > \underline{\delta}$, there exists an equilibrium in which both types of player $1$'s payoff is $v_1'$.
\end{Theorem}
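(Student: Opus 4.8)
The plan is to exhibit, for all large $\delta$, a perfect Bayesian equilibrium with two regimes — a \emph{reputation regime} whose path yields each type exactly $v_1'$, and a \emph{punishment regime} triggered by any detectable deviation and delivering player $1$ at most $\min\{v_1^{min},\widehat{v}_1\}$ — with off-path announcements disciplined by pessimistic beliefs, exactly as in the product-choice example where announcing $H$ leads the consumer to play $N$ and leaves the honest type worse off.

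\emph{Ingredients.} First I pick $(A^*,\beta^*)$ attaining the minimum in \eqref{5.2}, so that $\beta^*(a)\in\textrm{BR}_2(a)$ for all $a$ and $\sum_{\theta}p(\theta)\max_{a\in A^*}u_1(\theta,a,\beta^*(a))=v_1'$, together with a selector $m^*:\Theta\to A^*$ with $m^*(\theta)\in\arg\max_{a\in A^*}u_1(\theta,a,\beta^*(a))$. Next I build the threat: the repeated game in which player $1$ is commonly known to be opportunistic has an equilibrium $\sigma^P$ whose player-$1$ payoff $\underline{w}$ is at most $\min\{v_1^{min},\widehat{v}_1\}$ — if $v_1^{min}\le\widehat{v}_1$, have all players ignore the announcement and repeat a worst Nash equilibrium of the no-communication auxiliary game, which yields exactly $v_1^{min}$; if $\widehat{v}_1<v_1^{min}$ (so $\widehat{v}_1<+\infty$), repeat a worst pure-strategy equilibrium of the recommendation auxiliary game, using $m_t\in A$ merely to encode the recommendation, which yields at most $\widehat{v}_1$. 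By \eqref{5.3}, $\underline{w}\le\min\{v_1^{min},\widehat{v}_1\}\le v_1'$.

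\emph{The equilibrium and its verification.} In the reputation regime, in state $\theta_t$ both types announce $m^*(\theta_t)$, the opportunistic type plays $a_t=m^*(\theta_t)$ (the honest type does so automatically, since all of player $1$'s actions are feasible), and player $2_t$ plays $\beta^*(m_t)$ if $m_t\in A^*$ and a fixed best reply $\bar\beta$ to a pessimistic (``$\gamma_o$, will not honor $m_t$'') belief otherwise. Play remains in this regime while every realized $y_s=1$ and no announcement outside $A^*$ has occurred; any violation switches play permanently to $\sigma^P$, with off-path beliefs assigning probability one to $\gamma_o$ (supportable by trembles in which $\gamma_o$ deviates infinitely more often, so the construction survives as a sequential equilibrium). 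On the path player $2_t$'s belief puts probability one on $a_t=m_t$, so $\beta^*(m_t)$ is a best reply; after a switch the belief is $\gamma_o$, and $\bar\beta$ together with $\sigma^P$ is a best reply by the previous step. The honest type cannot deviate in the action stage; among announcements in $A^*$ the choice $m^*(\theta_t)$ is optimal by construction; and an announcement outside $A^*$ only forces it to play that action against $\bar\beta$ and then triggers $\sigma^P$. The opportunistic type earns continuation value $v_1'$ on the path, whereas any deviation earns at most a bounded one-period gain followed by continuation value $\underline{w}$, which is unprofitable for $\delta$ close to $1$ whenever $v_1'>\underline{w}$; then both types' payoff is $v_1'$ exactly.

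\emph{Main obstacle.} The delicate case is when \eqref{5.3} binds, so $v_1'=\min\{v_1^{min},\widehat{v}_1\}=\underline{w}$: there is no continuation slack left to absorb one-period deviation gains, and, worse, the honest type's ability to reveal its intended action through its announcement threatens to raise its payoff above $v_1'$ (this is exactly why $\widehat{v}_1$, the recommendation-game value, enters). I would resolve this by showing $\min\{v_1^{min},\widehat{v}_1\}$ is directly attained by an equilibrium of the full game: take the relevant auxiliary-game equilibrium from the second step and specify the honest type to babble-then-honor a fixed action (in the $v_1^{min}$ branch) or to imitate the on-path recommendation (in the $\widehat{v}_1$ branch), again meeting off-path announcements with pessimistic beliefs and choosing player $2$'s responses so that its announcement-insensitive behavior remains a best reply once an honest type may be present. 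Verifying those best responses — that no announcement the honest type can make is rewarded enough to beat $\min\{v_1^{min},\widehat{v}_1\}$ — is the real work; the rest is routine patient-player bookkeeping.
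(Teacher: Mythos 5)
Your construction is essentially the one in the paper: take a minimizer $(A',\beta)$ of (\ref{5.2}) subject to (\ref{5.3}), have both types announce and play $\arg\max_{a\in A'}u_1(\theta,a,\beta(a))$ on path with player $2$ responding $\beta(m_t)$, and revert permanently after any $y_s=0$ to the worst static Nash equilibrium of the no-communication auxiliary game when $v_1^{min}\le\widehat v_1$ and to the worst pure-strategy equilibrium of the recommendation game otherwise, with the opportunistic type's incentives supplied by (\ref{5.3}). Two caveats. First, your regime-switch rule conditions on whether ``an announcement outside $A^*$ has occurred,'' but player $2_t$ observes only $\{y_0,\dots,y_{t-1},m_t\}$, so past announcements that were honored leave no trace and this trigger is not implementable; in particular your claim that an off-$A^*$ announcement by the honest type ``triggers $\sigma^P$'' is wrong under this monitoring structure. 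Off-$A'$ announcements must be deterred entirely by the \emph{current} consumer's response (the paper uses $\beta(a')$ for a fixed $a'\in A'$), and the reversion rule should depend only on the $y_s$'s, as in the paper. This is fixable but it is doing real work in your verification as written.

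Second, the ``main obstacle'' you flag --- the case where (\ref{5.3}) binds, so $v_1'=\min\{v_1^{min},\widehat v_1\}$ and reversion carries no continuation loss --- is a legitimate observation: the paper's own one-line assertion that ``player $1$'s incentive constraint is implied by (\ref{5.3})'' really only covers the case where the inequality is strict and $\delta$ is close to $1$. Your proposed fix (directly realize $\min\{v_1^{min},\widehat v_1\}$ as a full-game equilibrium payoff by babbling onto the relevant auxiliary-game equilibrium) is the natural one, but as you acknowledge you have not verified it, and it is not automatic: with an honest type present, the honest type's announcement is a credible signal of its action, so player $2$'s announcement-insensitive behavior need not remain a best reply, and the honest type may have profitable announcement deviations that keep $y_t=1$ and hence cannot be punished. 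Since you explicitly defer exactly this verification, the proposal as submitted does not close the binding case; in the non-binding case it reproduces the paper's argument.
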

The proof of this result and  a subsequent lemma  are in Appendix \ref{secB}. 

In order to understand the connections between the conclusion of Theorem \ref{Theorem2}
and that of Theorem \ref{Theorem1}, 
we compare $v_1'$ with player $1$'s expected Stackelberg payoff $v_1^*$ and their minmax payoff. To start with, one can verify that $v_1^* \geq v_1'$ when players' payoffs are generic and the  auxiliary game without communication admits a pure-strategy equilibrium.\footnote{The generic requirement is that player $1$ has a strict best reply to every $b \in B$ for every $\theta \in \Theta$, and player $2$ has a strict best reply to every $a \in A$. The existence of a pure-strategy equilibrium in the auxiliary game without communication rules out zero-sum games such as matching pennies, where  the patient player cannot benefit from committing to pure actions.} 
Next, we introduce a class of games under which $v_1'$ is strictly less than $v_1^*$, and under an additional supermodularity 
condition, $v_1'$ equals player $1$'s minmax payoff.
\begin{Condition}
There exists a complete order on $A$ such that for every $\theta \in \Theta$, $u_1 (\theta,a,b)$ is strictly decreasing in $a$, and there exists $\theta \in \Theta$ such that player $1$'s Stackelberg action in state $\theta$ is not the lowest element in $A$. 
\end{Condition}
\begin{Lemma}
If every $\theta \in \Theta$ occurs with positive probability and the stage-game payoffs satisfy  supermodularity, then
\begin{enumerate}
    \item $v_1'<v_1^*$.
    \item In addition, if there also exists a complete order on $B$ such that $u_1$ is strictly increasing in $b$ and $u_2$ has strictly increasing differences in $a$ and $b$, then $v_1'$ is player $1$'s minmax payoff.
\end{enumerate}
\end{Lemma}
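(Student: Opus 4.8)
The plan is to prove each part by exhibiting one explicit feasible pair $(A',\beta)$ for the program \eqref{5.2}--\eqref{5.3} whose objective value already delivers the desired bound, and — for Part~2 — to match this with a lower bound on $v_1'$ coming from the constraint \eqref{5.3}.

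\emph{Part 1.} Write $\underline a$ for the least element of $A$ in the order supplied by the Supermodularity Condition. Since $u_1(\theta,\cdot,b)$ is strictly decreasing for every $(\theta,b)$, $\underline a$ is strictly dominant for player~$1$, so in the auxiliary game without communication player~$1$ plays $\underline a$ in every state, player~$2$ responds with its (generically unique) best reply $b^{\circ}$ to $\underline a$, and $v_1^{min}=\sum_{\theta}p(\theta)u_1(\theta,\underline a,b^{\circ})$. Take $A'=\{\underline a\}$ and $\beta(\underline a)=b^{\circ}$: the objective in \eqref{5.2} is then exactly $v_1^{min}$, so \eqref{5.3} holds (with equality) and $v_1'\le v_1^{min}$. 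For each $\theta$, taking $a=\underline a$ in the definition of $v_1^{*}(\theta)$ gives $u_1(\theta,\underline a,b^{\circ})=\min_{b\in\textrm{BR}_2(\underline a)}u_1(\theta,\underline a,b)\le v_1^{*}(\theta)$; and for the state $\hat\theta$ of the Supermodularity Condition the outer maximum defining $v_1^{*}(\hat\theta)$ is not attained at $\underline a$, so this inequality is strict there. Since $p(\hat\theta)>0$, summing over $\theta$ yields $v_1^{min}<v_1^{*}$, hence $v_1'\le v_1^{min}<v_1^{*}$.

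\emph{Part 2.} Set $\underline b=\min\textrm{BR}_2(\underline a)$, the least element of $B$ that best-replies to $\underline a$. Because $u_1$ is strictly decreasing in $a$, player~$1$'s best reply to any mixed $\alpha_2$ is $\underline a$ in every state; because $u_1$ is strictly increasing in $b$, $\max_a u_1(\theta,a,\alpha_2)=u_1(\theta,\underline a,\alpha_2)\ge u_1\big(\theta,\underline a,\min(\mathrm{supp}\,\alpha_2)\big)$; and because $u_2$ has strictly increasing differences in $(a,b)$, a standard monotone-comparative-statics argument gives $\min\textrm{BR}_2(\alpha_1)\ge\underline b$ for every $\alpha_1\in\Delta(A)$ — one writes $b\mapsto\sum_a\alpha_1(a)u_2(a,b)$ as $u_2(\underline a,b)$ plus a function with increasing differences in $(a,b)$ and checks that no $b<\underline b$ can maximize it. Consequently every player-$2$ action arising as a best reply places no mass below $\underline b$, so player~$1$'s minmax value is $\underline v_1=\sum_\theta p(\theta)u_1(\theta,\underline a,\underline b)$, achieved by the pure reply $\underline b$. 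The same two monotonicities show that the least-favorable-for-player-$1$ best reply to the dominant action $\underline a$ in the no-communication game is the pure action $\underline b$, so $v_1^{min}=\underline v_1$; since $\widehat v_1\ge\underline v_1$ (player~$1$'s payoff in any equilibrium is at least its minmax), $\min\{v_1^{min},\widehat v_1\}=\underline v_1$ and thus $v_1'\ge\underline v_1$ by \eqref{5.3}. On the other hand $(A',\beta)=(\{\underline a\},\ \beta(\underline a)=\underline b)$ is feasible with objective value $\underline v_1$, giving $v_1'\le\underline v_1$. Hence $v_1'=\underline v_1$.

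The dominance observation and the feasibility checks are routine; the step that takes real work is the computation of $\underline v_1$ in Part~2 — in particular the monotone-comparative-statics claim that $\min\textrm{BR}_2(\alpha_1)\ge\min\textrm{BR}_2(\underline a)$ for all \emph{mixed} $\alpha_1$, which is what guarantees that the pure action $\underline b$ is the worst response player~$1$ can be made to face. A secondary point is checking in Part~1 that the exhibited pair satisfies the constraint \eqref{5.3} rather than falling below it; this is where the assumption that every state has positive probability (and the genericity convention used throughout this section) enters.
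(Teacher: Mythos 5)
Your proof is correct and follows essentially the same route as the paper's: exhibit the pair $(A',\beta)=(\{\underline a\},\ \beta(\underline a)=\underline b)$ to bound $v_1'$ above by $\sum_\theta p(\theta)u_1(\theta,\underline a,\underline b)$, compare this with $v_1^*$ state by state using the second part of the Supermodularity Condition, and in Part~2 identify the same quantity with player $1$'s minmax value via the monotonicity of $\textrm{BR}_2$. You are in fact more explicit than the paper on two steps it leaves implicit --- verifying that the exhibited pair satisfies constraint (\ref{5.3}) (via $\underline a$ being dominant, so the objective equals $v_1^{min}$), and establishing the matching lower bound $v_1'\ge \underline v_1$ from $\min\{v_1^{min},\widehat v_1\}=\underline v_1$.
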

Condition 1 and the additional assumption in Lemma 1 fit applications such as product choice games, where a firm finds it costly to exert high effort, can strictly benefit from consumers' trust, and can benefit from committing to high effort in states where its production cost is low enough. The consumers have stronger incentives to trust the firm when the latter exerts higher effort. 
Our conditions also apply to  games of entry deterrence (Kreps and Wilson 1982, Milgrom and Roberts 1982), capital taxation (Phelan 2006), monetary policy (Barro 1986), and trust games more generally (Liu and Skrzypacz 2014).

\section{Extensions}\label{sec4}
We note here that the conclusion of Theorem 1 extends to two alternative scenarios.

\paragraph{Announcing Product Quality:} Suppose that players move sequentially in the stage game. In period $t \in\mathbb{N}$, player $1$ (e.g., a firm) chooses their effort $a_t \in A$, privately observes the quality of its product $x_t \in X$ which is distributed according to $g(\cdot|a_t) \in \Delta (A)$, and makes an announcement $m_t \in X$ about quality. Player $2_t$ (e.g., a consumer) observes $m_t$ as well as whether $x_{s}$ coincides with $m_{s}$ for all $s \leq t-1$ before choosing $b_t \in B$. We assume that $A$, $B$, and $X$ are finite sets.

Player $1$ is either an honest type who strategically chooses actions $a_t \in A$ but announces $x_t$ truthfully, or an opportunistic type who strategically chooses both the actions and the announcements.
Both types have stage-game payoff  $u_1(a_t,b_t)$ and discount factor  $\delta\in (0,1)$.  Player $2_t$'s payoff is $u_2(x_t,b_t)$, i.e., their payoff depends only on product quality and their purchasing decision. 
This fits the model of  \cite{jullienEtAl20} except that there is a positive probability of the  honest type, and the ex-post quality is not directly observed by subsequent consumers.  For every $x \in X$,
let $\textrm{BR}_2(x) \subset B$ be the set of pure best replies against $x$. 
Player $1$'s \textit{optimal commitment payoff} is
\begin{equation}\label{4.1}
  v^{**} \equiv   \max_{a \in A}  \Big\{  \sum_{x \in X} g(x|a) \min_{b \in \textrm{BR}_2(x)} u_1(a,b) \Big\}.
\end{equation}
\begin{Corollary}\label{Cor2}
If $g(\cdot|a)$ has full support for every $a \in A$, then for every $\varepsilon>0$, there exists $\underline{\delta} \in (0,1)$ such that when $\delta > \underline{\delta}$, every type of player $1$'s payoff in every Nash equilibrium is at least $v^{**}-\varepsilon$.
\end{Corollary}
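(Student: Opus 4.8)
The plan is to follow the proof of Theorem~\ref{Theorem1}, with the full-support assumption on $g$ playing the role of Assumption~\ref{Ass1} and the indicator signal $y_t=\mathbf{1}\{x_t=m_t\}$ playing the role of Assumption~\ref{Ass2}. First, since the opportunistic type can replicate the honest type's equilibrium strategy (the honest type is already truthful, so replication just means copying its effort rule), the opportunistic type's equilibrium payoff $V_o$ is at least the honest type's equilibrium payoff $V_h$; hence it suffices to show $V_h\ge v^{**}-\varepsilon$. Fix a Nash equilibrium, write $\mathbb{Q}$ for the measure it induces on types, efforts, qualities, announcements, signals, and actions, and fix $a^{**}$ attaining the maximum in \eqref{4.1}. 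Consider the honest type deviating to the stationary strategy $\tau^*$: exert effort $a^{**}$ every period and announce $x_t$ truthfully. Because the honest type never lies, this deviation keeps its word every period, so its realized public history is the all-kept-word history, which I denote $1^t$ at date $t$. Since $\tau^*$ is feasible for the honest type, $V_h$ is at least the payoff it delivers, and along this deviation the period-$t$ stage payoff is $u_1(a^{**},b_t)$ with $b_t$ drawn from $\sigma_2^t(1^t,m_t)$ and $m_t$ drawn from $g(\cdot\mid a^{**})$, independently across $t$.

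Second, I would split this deviation payoff into ``good'' and ``bad'' periods. Let $M_t^{\textrm{bad}}\subseteq X$ be the set of announcements $m$ such that $\sigma_2^t(1^t,m)$ is not a best reply to $m$. Every history $(1^t,m)$ with $m\in X$ has positive probability under $\mathbb{Q}$ (the honest type alone guarantees this, since $g$ has full support), so at such histories player~2 best replies to the Bayesian posterior $\mathbb{Q}[x_t\in\cdot\mid 1^t,m]$; because $\textrm{BR}_2$ is upper hemicontinuous and $B$ and $X$ are finite, there is $\rho^*>0$, independent of $t$, such that $m\in M_t^{\textrm{bad}}$ forces $\mathbb{Q}[x_t\neq m\mid 1^t,m]>\rho^*$. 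On periods with $m_t\notin M_t^{\textrm{bad}}$ the stage payoff is at least $\min_{b\in\textrm{BR}_2(m_t)}u_1(a^{**},b)$, whose expectation over $m_t\sim g(\cdot\mid a^{**})$ equals $v^{**}$ by the choice of $a^{**}$; on periods with $m_t\in M_t^{\textrm{bad}}$ it is at least $\min_{a,b}u_1(a,b)$. Hence $V_h\ge v^{**}-C(1-\delta)\sum_{t\ge 0} g(M_t^{\textrm{bad}}\mid a^{**})$, where $C\equiv\max_{a,b}u_1(a,b)-\min_{a,b}u_1(a,b)$, and it remains to bound $\sum_t g(M_t^{\textrm{bad}}\mid a^{**})$ by a constant independent of $\delta$.

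Third, I would get that bound from the monotonicity of the reputation along $1^t$. Let $\pi_t\equiv\mathbb{Q}[\gamma_h\mid 1^t]$. Since the honest type always keeps its word, Bayes' rule gives $\pi_{t+1}=\pi_t/\mathbb{Q}[y_t=1\mid 1^t]\ge\pi_t$, so $(\pi_t)$ is nondecreasing and bounded by $1$. Averaging the $\rho^*$ bound over the announcement, $\mathbb{Q}[y_t=0\mid 1^t]\ge\rho^*\,\mathbb{Q}[m_t\in M_t^{\textrm{bad}}\mid 1^t]$. Moreover the honest type announces $x_t\sim g(\cdot\mid a_t)$ for some history-dependent effort $a_t$, so full support of $g$ gives $\mathbb{Q}[m_t=m\mid 1^t,\gamma_h]\ge\min_a g(m\mid a)\ge\kappa\, g(m\mid a^{**})$ with $\kappa\equiv\min_{m\in X}\big(\min_a g(m\mid a)/\max_a g(m\mid a)\big)>0$, and therefore $\mathbb{Q}[m_t\in M_t^{\textrm{bad}}\mid 1^t]\ge\pi_0\kappa\, g(M_t^{\textrm{bad}}\mid a^{**})$. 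Combining the last three inequalities, $\log\pi_{t+1}-\log\pi_t\ge -\log\!\left(1-\rho^*\pi_0\kappa\, g(M_t^{\textrm{bad}}\mid a^{**})\right)\ge\rho^*\pi_0\kappa\, g(M_t^{\textrm{bad}}\mid a^{**})$; summing over $t$ and using $\pi_t\le 1$ yields $\sum_t g(M_t^{\textrm{bad}}\mid a^{**})\le\log(1/\pi_0)/(\rho^*\pi_0\kappa)\equiv N<\infty$, a constant depending only on the primitives. Thus $V_h\ge v^{**}-C(1-\delta)N$, and taking $\underline{\delta}$ close enough to $1$ that $C(1-\underline{\delta})N\le\varepsilon$ finishes the proof, the opportunistic type being covered by $V_o\ge V_h$.

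I expect the belief-updating step to be the crux. The delicate parts are: extracting a uniform $\rho^*$ from upper hemicontinuity of $\textrm{BR}_2$, so that ``player~2 fails to best reply to the current announcement'' has quantitative content; using full support of $g$ a first time to ensure the histories $(1^t,m)$ at which player~2's optimality is invoked are on path under $\mathbb{Q}$; and using full support of $g$ a second time to guarantee that the honest type covers every bad announcement with probability at least a fixed multiple of $g(\cdot\mid a^{**})$, which is what converts a non-negligible mass of bad announcements at date $t$ into a definite increase of $\pi_t$ and hence into the finiteness of $\sum_t g(M_t^{\textrm{bad}}\mid a^{**})$.
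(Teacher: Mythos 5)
Your proof is correct and follows essentially the same route as the paper's: deviate to truthful announcement with constant effort $a^{**}$, use full support of $g(\cdot|a)$ both to keep every history $(1^t,m)$ on path and to lower-bound the honest type's announcement probabilities, and bound the total incidence of non-best-reply periods by a bounded-learning argument. Your explicit telescoping of $\log\pi_t$ along the all-kept-word history is just the paper's relative-entropy chain rule specialized to the case where the honest type's signal is deterministic (and your accounting by the mass $\sum_t g(M_t^{\textrm{bad}}\mid a^{**})$ rather than by the number of bad periods is a harmless, slightly finer variant).
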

The proof is in Appendix D, which uses similar ideas as the proof of Theorem \ref{Theorem1}. 

Next we  show that  reputations for honesty cannot guarantee player $1$ their optimal commitment payoff
when product quality is a perfect signal of player $1$'s effort. Suppose $X=A$ and $g(a|a)=1$ for every $a \in A$, and players' stage-game payoffs are given by the following matrix:
\begin{center}
\begin{tabular}{| c | c | c |}
  \hline
  $-$ & $T$ & $N$ \\
  \hline
  $H$ & $1,2$ & $-1,0$ \\
  \hline
  $L$ & $2,-2$ & $0,0$ \\
  \hline
\end{tabular}
\end{center}
Player $1$'s optimal commitment payoff is $1$, which can be obtained by committing to play $H$. 

We construct a Perfect Bayesian equilibrium in which player $1$'s payoff is $0$. 
On the equilibrium path,
both types of player $1$ play $L$ 
and announce $L$
in every period, and player $2$s play $N$ at every on-path history. After observing announcement $H$, player $2$s believe that player $1$ is the opportunistic type and has played $L$ with probability $1/2$ in the current period, and best reply by playing $N$. This equilibrium survives both when player 2s can only observe whether $m_t$ matches with $x_t$ 
in all previous periods, and when player $2$s can observe the values of $x_t$ and $m_t$ in all previous periods. 

\paragraph{Making Announcements Before Knowing the Set of Feasible Actions:} In some applications, the patient player makes announcements before knowing which of their actions are feasible, and an honest individual may break their word when the action they announced turns out to be infeasible. Theorem 1 extends to this setting if (1) the honest type trembles and makes each announcement with positive probability, and (2) the probability with which all actions are feasible is close to $1$.

In period $t \in \mathbb{N}$, player $1$ observes $\theta_t \in \Theta_t$ and makes an announcement about their intended action $m_t \in A_t$. Player $2_t$ observes $m_t$, player $1$ observes the realization of $\omega_t \in \Omega \equiv 2^A \backslash \{\varnothing\}$, and then both players choose $(a_t,b_t) \in \omega_t \times B$ simultaneously. Future player $2$s observe $y_t \equiv \mathbf{1}\{a_t=m_t\}$. We assume
$\{\omega_t,\theta_t\}_{t \in \mathbb{N}}$ are i.i.d. over time, with
$p \in \Delta (\Omega \times \Theta)$ their joint distribution.

Player $1$ is either an opportunistic type who can take any action regardless of their announcement, or an honest type who chooses $a_t=m_t$ as long as $m_t \in \omega_t$.
Both types of player $1$ tremble when 
making announcements, i.e., there exists $\eta>0$ such that the probability with which each type makes each announcement is at least $\eta$ at every information set.  
\begin{Corollary}\label{Cor3}
For every $\varepsilon>0$, there exist 
$\underline{\delta} \in (0,1)$, $\overline{\eta}>0$, and $c \in (0,1)$, 
such that when $\delta > \underline{\delta}$, $\eta \in (0,\overline{\eta})$ and the probability that $\omega_t=A$ is at least $1-\eta c$, then each type of player $1$ receives payoff at least $v_1^*-\varepsilon$ in every Nash equilibrium.
\end{Corollary}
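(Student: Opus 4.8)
The plan is to mimic the proof of Theorem~\ref{Theorem1} (and Corollary~\ref{Cor2}), with the trembling hypothesis doing the work that Assumption~\ref{Ass1} does there. Fix $\varepsilon>0$, put $M\equiv\max_{\theta,a,b}|u_1(\theta,a,b)|$, and choose a selection $a^*$ with $a^*(\theta)\in\arg\max_{a}\min_{b\in\textrm{BR}_2(a)}u_1(\theta,a,b)$. Since $B$ is finite and $\textrm{BR}_2$ is upper hemicontinuous, there is $\alpha>0$ such that every best reply to a belief $\nu\in\Delta(A)$ with $\nu(a^*(\theta))\ge 1-\alpha$ lies in $\textrm{BR}_2(a^*(\theta))$, hence yields player~$1$ at least $v_1^*(\theta)$ in state $\theta$. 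Fix a Nash equilibrium and consider the honest type's feasible deviation $\widetilde\sigma_h$: announce $a^*(\theta_t)$ with the maximal probability the tremble permits and weight $\eta$ on every other action, then --- as the honest type must --- play $a_t=m_t$ when $m_t\in\omega_t$ and any feasible action otherwise. Because the honest type best responds, its equilibrium payoff is at least that of $\widetilde\sigma_h$; because the opportunistic type can reproduce $\widetilde\sigma_h$ verbatim, so is its equilibrium payoff. It therefore suffices to show that $\widetilde\sigma_h$ earns at least $v_1^*-\varepsilon$.

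Three features replace the ingredients of the Theorem~\ref{Theorem1} argument that relied on Assumption~\ref{Ass1} and on $y_t=\mathbf{1}\{a_t=m_t\}$ being exact. First, $m_t$ is chosen before $\omega_t$ and $\{m_t\notin\omega_t\}\subseteq\{\omega_t\ne A\}$; since $\Pr[\omega_t\ne A]\le\eta c$ and each type announces each action with probability at least $\eta$, the probability of $\{m_t\notin\omega_t\}$ is at most $\eta c$ unconditionally and at most $\Pr[\omega_t\ne A]/\eta\le c$ conditional on any announcement and public history, so under $\widetilde\sigma_h$ the honest type breaks its word only through infeasibility and does so with (conditional) probability at most $c$. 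Second, because both types announce each action with probability in $[\eta,1]$, observing $m_t$ shifts the myopic player's log-odds on honesty by at most $\log(1/\eta)$ --- the bounded-belief-update property that Assumption~\ref{Ass1} provides in Theorem~\ref{Theorem1}. Third, because the honest type's tremble sometimes makes it announce an action that then proves infeasible, the signal $y_t=0$ is on-path for the honest type after every announcement, so player~$2$'s belief is determined by Bayes' rule throughout and no single word-break can send the posterior on honesty to zero.

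Call period $t$ \emph{bad} if, conditional on the honest type announcing $a^*(\theta_t)$ at the realized history, player~$2_t$ fails to best reply to $a^*(\theta_t)$. A bad period requires the equilibrium belief that $a_t=m_t$ --- which equals $1-\mu^+(1-q^h)-(1-\mu^+)(1-q^o)$, where $\mu^+$ is the posterior on honesty after $m_t$, $q^h\ge 1-c$ the honest type's word-keeping probability, and $q^o$ the opportunistic type's --- to be below $1-\alpha$; so (using $\mu^+(1-q^h)\le c$) a bad period forces $(1-\mu^+)(1-q^o)\ge\alpha-c$, i.e.\ player~$2$'s assessment puts probability at least $\alpha-c$ on ``opportunistic type who breaks its word.'' With $c<\alpha/2$, the realization $y_t$ consistent with word-keeping --- which has probability at least $1-c>1-\alpha/2$ under $\widetilde\sigma_h$ --- then raises the posterior on honesty by a factor bounded away from $1$ by a constant depending only on $\alpha$. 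Adapting the counting argument in the proof of Theorem~\ref{Theorem1} bounds the expected number of bad periods by some $N$ that depends on $\pi_0$, $\eta$, $\alpha$ and the stage game but not on $\delta$. A non-bad period in which $a^*(\theta_t)$ is announced contributes at least $v_1^*(\theta_t)-2M\eta c$; trembled-announcement periods cost at most $2M|A|\eta$ in total and bad periods at most $2M(1-\delta)N$; hence $\widetilde\sigma_h$ earns at least $v_1^*-2M\eta c-2M|A|\eta-2M(1-\delta)N$. It remains to pick $c\in(0,\alpha/2)$, then $\overline\eta>0$ with $2M\eta c+2M|A|\eta<\varepsilon/2$ for all $\eta<\overline\eta$, then $\underline\delta<1$ with $2M(1-\delta)N<\varepsilon/2$.

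The step I expect to be hardest is adapting the bound on the number of bad periods. Unlike in Theorem~\ref{Theorem1}, $y_t$ is noisy, and --- more seriously --- $\widetilde\sigma_h$ and the equilibrium honest strategy disagree on the announcement (hence on the distribution of $\{m_t\notin\omega_t\}$) in every period, so the off-the-shelf bookkeeping ``cumulative one-step prediction divergence $\le-\log\pi_0$'' does not apply verbatim. Making the counting go through uses that the honest type's \emph{action} rule after a given announcement coincides with the equilibrium one up to the infeasibility term, that all announcement likelihood ratios lie uniformly in $[\eta,1/\eta]$, and that the infeasibility term is of smaller order than the tremble --- precisely the content of the joint hypotheses ``$\eta$ small'' and ``$\Pr[\omega_t=A]\ge 1-\eta c$ with $c$ small.'' Verifying that the resulting $N$ is uniform over all Nash equilibria and over $\eta\in(0,\overline\eta)$ is where the real work lies.
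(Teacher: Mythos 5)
Your setup --- the deviation strategy, the $[\eta,1/\eta]$ bound on announcement likelihood ratios, and the observation that a bad period forces player $2$ to assign probability at least $\alpha-c$ to ``opportunistic and word-breaking'' --- matches the paper's. But the step you yourself flag as hardest is exactly where the argument breaks, and your proposed resolution, a finite bound $N$ on the expected \emph{number} of bad periods independent of $\delta$, is false in this setting. The counting argument of Theorem~\ref{Theorem1} relies on part~1 of Assumption~\ref{Ass2} to get $P^{\sigma_h,\sigma_2}=P^{\sigma_h^*,\sigma_2}$, which makes the cumulative one-step prediction divergence summable and hence the number of detectably bad periods finite. Here the deviation $\widetilde\sigma_h$ and the equilibrium honest strategy induce different announcement distributions, hence different probabilities of infeasibility-forced word-breaking, and both differ from ``always keep word.'' Consequently the log-likelihood ratio $l_t=\log\frac{\pi_t(\gamma_h)}{\pi_t(\gamma_o)}$ can have strictly negative expected drift even in good periods, of order $\beta\equiv D(1-\rho\,\|\,1-\eta\rho)$ per period: the posterior on honesty can slowly deteriorate over an arbitrarily long stretch of good periods, cross the trust threshold, be boosted by one bad period, and deteriorate again, indefinitely. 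So the expected number of bad periods is unbounded, no such $N$ exists, and your term $2M(1-\delta)N$ does not control the loss.

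The paper's proof replaces the count bound by a bound on the discounted \emph{fraction} of bad periods: the drift of $l_t$ is at least $\alpha>0$ in bad periods and at least $-\beta$ in good periods, $l_t\ge l^*$ forces a good period, and an Azuma--Hoeffding concentration argument then shows the expected discounted fraction of good periods is at least $\frac{\alpha}{\alpha+\beta}-\widehat\varepsilon$ for $\delta$ large. The hypothesis $\Pr[\omega_t\ne A]\le \eta c$ with $c$ small is used not merely to make per-period losses small (as in your accounting) but to make $\alpha/\beta\to+\infty$, i.e., to drive the fraction of bad periods to zero. To complete your proof you must replace ``bounded number of bad periods'' by this drift-and-fraction argument; your remaining error accounting then goes through with $2M(1-\delta)N$ replaced by a term of order $\frac{\beta}{\alpha+\beta}$ plus a vanishing concentration term.
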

The proof is in Appendix E. Unlike our baseline model and  reputation models with noisy monitoring such as \cite{fl92}, when the honest type uses the strategy of announcing their Stackelberg action and keeping their word whenever it is feasible,  their reputation may deteriorate in expectation.

Our proof starts by  showing that when $\omega_t=A$ with probability close to $1$, the probability that the honest type keeps their word in equilibrium is close to $1$, 
and for reputation to deteriorate when the honest type keeps their word, 
the opportunistic type must also keep their word with probability close to $1$. It implies that in those periods, player $2$ has a strict incentive to best reply to player $1$'s announcement, and moreover, the amount of reputation deterioration 
is small.
By contrast, in ``bad'' periods where player $2$ has a strict incentive not to best reply against player $1$'s announcement, the probability that the opportunistic type breaks their promise is large and 
keeping one's word leads to a significant improvement in one's reputation. Although the number of bad periods can be unbounded, their fraction goes to zero as the probability of $\omega_t=A$ goes to one.

\appendix
\section{Proof of Theorem 1}\label{secA}
Fix any Nash equilibrium $(\sigma_o,\sigma_h,\sigma_2)$ and consider any history $h^t$ that occurs with strictly positive probability under  $(\sigma_o,\sigma_h,\sigma_2)$.

For $i \in \{h,o\}$, let $P^{\sigma_i,\sigma_2}$ be the probability measure over $Y^{\infty}$ induced by $(\sigma_i,\sigma_2)$. Denote player 2’s belief over player 1’s private history as a function of  $h^t_2$ by $\beta(\hat h^t_1|h^t_2)$, and let $\sigma_o(h^t_2)$ be the expected distribution of opportunistic player 1’s joint announcement-action pairs implied by $\beta(h^t_1|h^t_2)$, with $\hat \sigma_o(h^t_2)$ and $\tilde \sigma_o(h^t_2)$ the marginal distributions of announcements and actions, respectively. Let $\pi_t$ be the probability of the honest type according to player $2$'s belief in period $t$ after observing $\{y_0...,y_{t-1}\}$. According to Bayes rule,
\begin{equation}
\pi_t = \frac{P^{\sigma_h,\sigma_2}(y_0,...,y_{t-1}) \pi_0}{P^{\sigma_h,\sigma_2}(y_0,...,y_{t-1}) \pi_0 + P^{\sigma_o,\sigma_2}(y_0,...,y_{t-1}) (1- \pi_0)}.
\end{equation}
Let 
\[
\alpha_t(m_t) \equiv \pi_t \hat \sigma_h(h^t_2)(m_t) + (1-\pi_t) \hat \sigma_o(h^t_2)(m_t), 
\]
which is the probability of announcement $m_t$ conditional on $h^t_2$. 
Let $\xi_t(m_t)$ be the probability that $a_t=m_t$ conditional on $m_t$,
\[
\xi_t(m_t) \equiv \frac{\pi_t \tilde \sigma_h(h^t_2)(m_t)}{\pi_t \hat \sigma_h(h^t_2)(m_t) + (1-\pi_t) \hat \sigma_o(h^t_2)(m_t)} + \frac{(1-\pi_t) \tilde \sigma_o(h^t_2)(m_t)}{\pi_t \hat \sigma_h(h^t_2)(m_t) + (1-\pi_t) \hat \sigma_o(h^t_2)(m_t)}.
\]
Let $\xi_t$ be the unconditional probability that  player $1$'s action matches their announcement:
\begin{equation}\label{3.2}
\xi_t \equiv \sum_{a \in A} \alpha_t (a) \xi_t(a).
\end{equation}
Let $\underline{\rho} \equiv \min_{a \in A} \Pr(\omega_t=\{a\})$, which by Assumption \ref{Ass1} is strictly positive. 
This implies that $\alpha_t (m)> \underline{\rho}$ for every announcement $m \in A$.
Let $\overline{\lambda} \in (0,1)$ be the smallest real number such that for every $\theta \in \Theta$, player $2$ strictly prefers one of the actions in $\textrm{BR}_2(a^*(\theta))$ to all actions outside of $\textrm{BR}_2(a^*(\theta))$ when they believe that player $1$ plays $a^*(\theta)$ with probability strictly more than $\overline{\lambda}$.

Consider the honest type's payoff when they use strategy $\sigma_h^* \equiv (\widehat{\sigma}_h^*,\widetilde{\sigma}_h^*)$, where $\widetilde{\sigma}_h^* (m)=m$ for every $m \in A$, and $\widehat{\sigma}_h^* (\theta_t,\omega_t)=a^*(\theta_t)$ when $a^*(\theta_t) \in \omega_t$ and is  uniform over the actions in $\omega_t$ when $a^*(\theta_t) \notin \omega_t$.
For any history $h^t$, suppose there exists $m_t \in A$ such that $\xi_t(m_t)\leq \overline{\lambda}$, then $\xi_t \leq \xi^* \equiv 1-(1-\overline{\lambda}) \underline{\rho}$. Let $d(\cdot || \cdot)$ denote the KL-divergence, and let $F^* \equiv F(\cdot|a,a)$.
Let
\begin{equation}\label{3.3}
D^* \equiv \min_{a \neq m} d \Big(F^*\Big\| \xi^* F^* +(1-\xi^*) F(\cdot|a,m)\Big).
\end{equation}
Part 2 of Assumption \ref{Ass2} implies that $D^*>0$ and is independent of player $1$'s discount factor $\delta$.

Part 1 of Assumption \ref{Ass2} implies that 
$P^{\sigma_h,\sigma_2}=P^{\sigma_h^*,\sigma_2}$.  Let $F(y|h^t_2) = \sum_{(a,m) \in A^2} F(y|a,m) \sigma_o(h^t_2)(a,m)$ so that $F(\cdot| h^t_2))$ denotes the distribution over $y_t$ induced by $\sigma_o(h_2^t)$.

Similar to Gossner (2011), the chain rule for relative entropy implies:
\begin{equation*}
\begin{split}
    - \log \pi_0 &\geq d\Big(P^{\sigma_h,\sigma_2}\Big|\Big|\pi_0 P^{\sigma_h,\sigma_2}+(1-\pi_0) P^{\sigma_o,\sigma_2} \Big)\\
    &=d\Big(P^{\sigma_h^*,\sigma_2}\Big|\Big|\pi_0 P^{\sigma_h,\sigma_2}+(1-\pi_0) P^{\sigma_o,\sigma_2} \Big)\\
    &=\mathbb{E}^{(\sigma_h,\sigma_2)}\Big[
    \sum_{t=0}^{\infty}
    d\Big(
    F^*
    \Big|\Big|
    \pi_t F^* + (1-\pi_t)F(\cdot|h^t_2)
    \Big)
      \Big],
\end{split}
\end{equation*}
where the first equality comes from the fact that $(\sigma_h,\sigma_2)$ and $(\sigma_h^*,\sigma_2)$ induce the same distribution over $y$ given the first part of Assumption 2. 

Therefore,
$d(F^*||F(h^t)) \geq D^*$
if $h^t$ is such that $\xi_t(a_t) \leq \overline{\lambda}$ for some $a_t \in A$, so
the expected number of such periods is at most 
\begin{equation}\label{3.5}
   \overline{T}(\pi_0) \equiv \Big\lceil \frac{-\log \pi_0}{D^*}. \Big\rceil
\end{equation}
Hence 
the honest type's payoff from $\sigma_h^*$ is at least
\begin{equation}\label{3.6}
\delta^{T(\pi_0)}  \Big\{    (1-\frac{\varepsilon}{\min_{\theta \in \Theta} p(\theta)})    v_1^*
 +\frac{\varepsilon}{\min_{\theta \in \Theta} p(\theta)} \underline{v}_1 \Big\} + (1-\delta^{T(\pi_0)}) \underline{v}_1,
\end{equation}
in which $\underline{v}_1$ is player $1$'s lowest stage-game payoff.
Expression (\ref{3.6}) converges to $v_1^*$ when $\delta \rightarrow 1$ and $\varepsilon \rightarrow 0$. Since the opportunistic type's payoff is weakly greater than the honest type's payoff, their equilibrium payoff is also weakly more than (\ref{3.6}).

\section{Proofs of Theorem 2 and Lemma 1}\label{secB}
\paragraph{Proof of Theorem 2:} Suppose $(A',\beta)$ 
solves (\ref{5.2}) subject to (\ref{5.3}), and consider the following strategy profile: At every on-path history with $\theta_t=\theta$, both types of player $1$ announce the same $a \in \arg\max_{a \in A'}  u_1(\theta,a,\beta(a))$ and match their actions with their announcements. Player $2$ chooses $\beta(a)$ following announcement $a$, and chooses $\beta(a')$ if player $1$'s announcement does not belong to $A'$, where $a'$ is an arbitrary element of $A'$.
At every $h^t$ where  $y_s \neq 1$ for some $s <t$, player $2$ believes that player $1$ is opportunistic. If $v_1^{min} \leq \widehat{v}_1$, then players coordinate on the worst stage-game Nash equilibrium for player $1$. If $v_1^{min} > \widehat{v}_1$, then players coordinate on the worst pure-strategy Nash equilibirum in the second auxiliary game. Player $2$s' incentive constraints are trivially satisfied, and player $1$'s incentive constraint is implied by (\ref{5.3}). 

\paragraph{Proof of Lemma 1:} 
Let $\underline{a} \equiv \min A$ and let $\underline{b}$ be player $2$'s best reply against $\underline{a}$. 
According to (\ref{5.2}), $v_1' \leq \sum_{\theta \in \Theta} p(\theta) u_1(\theta,\underline{a},\underline{b})$. According to (3.1), $v^* \geq \sum_{\theta \in \Theta} p(\theta) u_1(\theta,\underline{a},\underline{b})$, and the inequality is strict from  the second part of Condition 1 and the fact that each $\theta$ has strictly positive probability.

If $u_1$ is strictly decreasing in $b$ and $u_2$ has strictly increasing differences, then $u_1(\theta,\underline{a},\underline{b})$ is player $1$'s minmax payoff in state $\theta$. Since $v_1' \leq \sum_{\theta \in \Theta} p(\theta) u_1(\theta,\underline{a},\underline{b})$, $v_1'$ is player $1$'s minmax value. 

\section{Proof of Corollary 1}\label{secC} 
Recall the definition of $\xi^*$ in the proof of Theorem 1. Let
\begin{equation}
    \widehat{\xi} \equiv 1- \underline{\rho}^K (1-\xi^*).
\end{equation}
Suppose the honest type announces the Stackelberg action whenever it is available.
By construction, if player $2_t$ believes that $a_t=m_t$ with probability at least $\widehat{\xi}$ after observing $\{y_0,...,y_{t-1}\}$, then because $\omega=\{a\}$ with probability at least $\underline{\rho}$, their posterior belief after observing at most $K$ realizations of $\{z_0,...,z_{t-1}\}$ attaches probability at least $\xi^*$ to $a_t=m_t$. This implies that player $2$ has an incentive to best reply against player $1$'s announced actions. Let
\begin{equation}\label{A.1}
\widehat{D} \equiv \min_{a \neq m} d \Big(F^*\Big\| \widehat{\xi} F^* +(1-\widehat{\xi}) F(\cdot|a,m)\Big).
\end{equation}
The same argument as in the proof of Theorem 1 implies that in expectation, there exist at most 
\begin{equation}\label{3.5}
   \widehat{T}(\pi_0) \equiv \Big\lceil \frac{-\log \pi_0}{\widehat{D}} \Big\rceil
\end{equation}
periods in which player $2$ 
believes that $a_t=m_t$ with probability less than $\widehat{\xi}$ after observing $\{y_0,...,y_{t-1}\}$. As a result, the honest type's payoff is at least $v^*$ when $\varepsilon \rightarrow 0$ and $\delta \rightarrow 1$.

\section{Proof of Corollary 2}\label{secD}
Fix any equilibrium  $(\sigma_o,\sigma_h,\sigma_2)$ and consider any history $h^t$ that occurs with strictly positive probability under $P^{(\sigma_o,\sigma_h,\sigma_2)}$.
Let $\xi_t \in [0,1]$ be the probability player 2 assigns to 
$\mathbf{1}\{m_t=x_t\}$ 
after observing $(y_0,...,y_{t-1})$ but before observing $m_t$. Let $\xi_t(m_t) \in [0,1]$ be the probability 
their belief attaches to $\mathbf{1}\{m_t=x_t\}$ 
after observing $(y_0,...,y_{t-1})$ and $m_t$.
Let $\alpha_t(m_t) \in [0,1]$ be the probability of announcement $m_t$ 
conditional on $(y_0,...,y_{t-1})$. By definition,
\begin{equation}
\xi_t=\sum_{m \in X} \alpha_t (m) \xi_t(m).
\end{equation}
Let $\underline{g} \equiv \min_{(a,x) \in A \times X} g(x|a)$, which is strictly positive under the full support assumption. 
Let $\overline{\lambda} \in (0,1)$ be large enough such that for every $x \in X$, if player $2$ believes that $x$ occurs with probability at least $\overline{\lambda}$, then they strictly prefer one of the actions in $\textrm{BR}_2(x)$ to all actions that do not belong to $\textrm{BR}_2(x)$. Let $\xi^* \equiv 1-\underline{g} (1-\overline{\lambda})$. When
$\xi_t \geq \xi^*$, they believe that $m_t=x_t$ with probability more than $\overline{\lambda}$ after observing $m_t$. As in the proof of Theorem 1, one can show that the expected number of periods in which $\xi_t \leq \xi^*$ is uniformly bounded from above. Therefore, player $1$'s payoff is at least $v^*$ as $\delta \rightarrow 1$. 

\section{Proof of Corollary 3}\label{secE}
For every $t \in \mathbb{N}$, let $\nu_t \in \{0,1\}$ be a random variable such that 
\begin{enumerate}
    \item $\nu_t=1$ if for every $a \in A$, when player $1$ announces $a$ period $t$, player $2$ strictly prefers one of the actions in 
$\textrm{BR}_2(a)$ to all actions that do not belong to $\textrm{BR}_2(a)$,
\item $\nu_t=0$ otherwise.
\end{enumerate}
There exists $\xi \in (0,1)$ such that for every $a \in A$, all actions outside of $\textrm{BR}_2(a)$ are strictly inferior when player $2$ believes that player $1$ plays $a$ with probability more than $\xi$. Let $\pi_t \in \Delta \{\gamma_h,\gamma_o\}$ be player $2$'s belief in period $t$ after observing $\{y_0,...,y_{t-1}\}$ but before observing $m_t$, and let $\widehat{\pi}_t \in \Delta \{\gamma_h,\gamma_o\}$ be player $2$'s belief in period $t$ after observing $\{y_0,...,y_{t-1}\}$ and $m_t$.
Let $l_t \equiv \log \frac{\pi_t(\gamma_h)}{\pi_t(\gamma_o)}$ and $\widehat{l}_t \equiv \log \frac{\widehat{\pi}_t(\gamma_h)}{\widehat{\pi}_t(\gamma_o)}$. Since player $1$ trembles with probability $\eta$, we have $\widehat{l}_t-l_t \geq \log \eta$. 
As a result, there exists $l^* \in \mathbb{R}_+$ such that
$l_t \geq l^*$ implies that $\nu_t=1$.

Let $\Pr(\cdot| \sigma_h)$ be the probability under the honest type's equilibrium strategy. 
Let $\Pr(\cdot|\sigma_o)$ be the probability under the opportunistic type's equilibrium strategy.
Let $\Pr(\cdot| \sigma_h^*)$ be the probability under the strategy of announcing the Stackelberg action in each state, and keeps one's word whenever it is feasible. 
Let $\rho \equiv 1-\Pr(\omega_t=A)$. 
We have
\begin{equation*}
    \Pr(m_t=a_t|\sigma_h) \in [1-\rho , 1-\rho \eta], \quad
    \Pr(m_t=a_t|\sigma_h^*) \in [1-\rho, 1-\rho \eta], \quad \textrm{and} \quad
    \Pr(m_t=a_t|\sigma_o) \leq 1-\rho \eta. 
\end{equation*}
In periods where $\nu_t=0$, the Markov's inequality implies that $\Pr(m_t=a_t|\sigma_o)< 1-\eta(1-\xi)$. 
When player $1$ plays according to $\sigma_h^*$,  Bayes Rule implies that 
\begin{equation}\label{C.1}
    \mathbb{E}[l_{t+1}-l_t | l_t]
    = D( p_t(\sigma_h^*) || p_t(\sigma_o) )
    - D( p_t(\sigma_h^*) || p_t(\sigma_h) ),
\end{equation}
where $D(\cdot||\cdot)$ denoted the KL-divergence, and $p_t(\sigma)$ is the distribution over $y_t$ under strategy $\sigma$. When $\nu_t=1$, we have
\begin{equation}\label{C.2}
   D( p_t(\sigma_h^*) || p_t(\sigma_o) )
    - D( p_t(\sigma_h^*) || p_t(\sigma_h) ) \geq - D(1-\rho || 1-\eta \rho) \equiv -\beta.
\end{equation}
When $\nu_t=0$, we have
\begin{equation}\label{C.3}
   D( p_t(\sigma_h^*) || p_t(\sigma_o) )
    - D( p_t(\sigma_h^*) || p_t(\sigma_h) ) \geq 
  D(1-\rho || 1-\eta (1-\xi)) - D(1-\rho || 1-\eta \rho) \equiv \alpha,
\end{equation}
where $D(x_1||x_2)$ denotes the KL-divergence between a distribution that attaches probability $x_1$ to $y_t=1$, and one that attaches probability $x_2$ to $y_t=1$. 
When $\rho$ is small enough relative to $\eta$, the RHS of (\ref{C.3}) is strictly positive, and moreover, for any fixed $\eta>0$, $\frac{\alpha}{\beta} \rightarrow +\infty$ as $\rho \rightarrow 0$.

We establish a lower bound for the expected value of $\sum_{t=0}^{\infty} (1-\delta)\delta^t \nu_t$ when $\delta$ is close enough to $1$. 
Recall that $y_t=\mathbf{1}\{a_t=m_t\}$.
Let $Z_t$ be a random variable such that
\begin{equation*}
    Z_t=\log \frac{\Pr(y_t|\sigma_h)}{\Pr(y_t|\sigma_o)} \textrm{ with probability } \Pr(y_t|\sigma_h^*) \textrm{ for every } y_t \in Y. 
\end{equation*}
Our analysis above suggests that when $\nu_t=1$, we have $\mathbb{E}[Z_t|\sigma_h^*] \geq -\beta$, and when $\nu_t=0$, we have $\mathbb{E}[Z_t|\sigma_h^*] \geq \alpha$. 
\begin{Claim}
For every $\varepsilon>0$, there exists $T \in \mathbb{N}$ such that for every $t \geq T$, $\mathbb{E}[\sum_{s=0}^{t-1} \nu_s|\sigma_h^*] \geq t(\frac{\alpha}{\alpha+\beta}-\varepsilon)$ with probability more than $1-\varepsilon$. 
\end{Claim}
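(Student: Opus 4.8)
The plan is to run player $2$'s log-likelihood ratio $l_t\equiv\log\frac{\pi_t(\gamma_h)}{\pi_t(\gamma_o)}$ along a path generated by $\sigma_h^*$, turn the one-step drift estimates (\ref{C.1})--(\ref{C.3}) into a pathwise lower bound on $\sum_{s<t}\nu_s$, and then absorb the martingale part of the error using a concentration estimate.

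First I would set $Z_s\equiv l_{s+1}-l_s$; by Bayes' rule $Z_s=\log\frac{\Pr(y_s\mid\sigma_h)}{\Pr(y_s\mid\sigma_o)}$ with $y_s\sim\Pr(\cdot\mid\sigma_h^*)$, the random variable in the statement, and $l_t=l_0+\sum_{s=0}^{t-1}Z_s$. Conditioning on the public history $\mathcal{F}_s\equiv\sigma(y_0,\dots,y_{s-1})$, (\ref{C.1}) gives $\mathbb{E}[Z_s\mid\mathcal{F}_s]=D(p_s(\sigma_h^*)\|p_s(\sigma_o))-D(p_s(\sigma_h^*)\|p_s(\sigma_h))$, so by (\ref{C.2})--(\ref{C.3}) and $\nu_s\in\{0,1\}$, $\mathbb{E}[Z_s\mid\mathcal{F}_s]\ge\alpha(1-\nu_s)-\beta\nu_s=\alpha-(\alpha+\beta)\nu_s$. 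Writing $M_t\equiv\sum_{s=0}^{t-1}\bigl(Z_s-\mathbb{E}[Z_s\mid\mathcal{F}_s]\bigr)$ (an $(\mathcal{F}_t)$-martingale with $M_0=0$) and summing, for every $r$ I would obtain the pathwise inequality $l_r-l_0\ge M_r+\alpha r-(\alpha+\beta)\sum_{s<r}\nu_s$, equivalently $\sum_{s<r}\nu_s\ge\frac{\alpha r+M_r+l_0-l_r}{\alpha+\beta}$.

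Since $l_r$ is not bounded above (player $1$'s reputation may drift to certainty), I would not apply this at the horizon $r=t$ but at the \emph{last} ``bad'' period $\tau\equiv\max\{s<t:\nu_s=0\}$ (if there is none, every $\nu_s=1$ and the claim is immediate for $t$ large). By definition of $l^*$, $\nu_\tau=0$ forces $l_\tau<l^*$; since $\nu_s=1$ for all $\tau<s<t$, applying the inequality at $r=\tau$, bounding $l_\tau<l^*$, and adding the $t-1-\tau$ good periods after $\tau$ gives, on every path and using $0\le\tau\le t-1$,
\[
\sum_{s<t}\nu_s\;\ge\;\frac{\alpha\tau+M_\tau+l_0-l^*}{\alpha+\beta}+(t-1-\tau)\;\ge\;\frac{\alpha}{\alpha+\beta}(t-1)+\frac{M_\tau+l_0-l^*}{\alpha+\beta}.
\]
It then remains to show that with probability more than $1-\varepsilon$ the last term exceeds $-\varepsilon t$ for $t\ge T$: since $l_0-l^*-\alpha$ is a constant and $\tau\le t-1$, this reduces to the uniform estimate $\Pr\bigl(\min_{0\le r\le t}M_r\le-\tfrac{\varepsilon}{2}(\alpha+\beta)t\bigr)\le\varepsilon$ for $t\ge T$, a maximal martingale concentration bound. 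Combining the two then yields $\sum_{s<t}\nu_s\ge(\tfrac{\alpha}{\alpha+\beta}-\varepsilon)t$ with probability more than $1-\varepsilon$, which is the claim.

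The hard part will be exactly this concentration estimate for $M_\tau$: the increments $Z_s$ are bounded \emph{below} (indeed $Z_s\ge\log(\rho\eta)$, since $\Pr(y_s{=}0\mid\sigma_h)\ge\rho\eta$ while $\Pr(y_s{=}0\mid\sigma_o)\le1$ and $\Pr(y_s{=}1\mid\sigma_h)\ge1-\rho$ while $\Pr(y_s{=}1\mid\sigma_o)\le1-\rho\eta$) but \emph{not above}, a large upward jump occurring precisely when $\Pr(y_s{=}1\mid\sigma_o)$ is tiny, so $M$ does not have uniformly bounded increments and plain Azuma--Hoeffding does not apply directly. The observation that rescues the argument is that a large upward jump occurs under $\sigma_h^*$ with the dominant probability $\ge1-\rho$ in that period and immediately pushes $l$ above $l^*$, hence cannot precede a bad period (the degenerate sub-case $\Pr(y_s{=}1\mid\sigma_o)=0$ is even easier, since a single observation $y{=}1$ then reveals the honest type and forces $\nu\equiv1$ thereafter, and under $\sigma_h^*$ such an observation arrives within $O(1)$ periods). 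Formalizing this, e.g.\ by truncating the increments at a constant, feeding the truncated martingale into a Freedman/Bernstein-type inequality, and separately controlling the discarded mass, is the technical heart; the remaining bookkeeping (choosing $T$ so the constant terms and the tail bound are each below $\tfrac{\varepsilon}{2}(\alpha+\beta)t$, and passing from the high-probability bound on $\sum_{s<t}\nu_s$ to the statement) is routine.
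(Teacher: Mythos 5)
Your skeleton is correct and takes a genuinely different route from the paper at the step that converts the drift estimates into a bound on the count of good periods. The paper argues in expectation and by contradiction: it supposes $\frac{1}{t}\mathbb{E}[\sum_{s<t}\nu_s\mid\sigma_h^*]\leq\frac{\alpha}{\alpha+\beta}-\varepsilon_1$, deduces that the compensator $\sum_{s<t}\mathbb{E}[Z_s]$ then exceeds $t\varepsilon_1(\alpha+\beta)$, applies Azuma--Hoeffding to conclude that $\Pr(l_t\leq l^*)$ is exponentially small so that $\nu_t=1$ with high probability, and closes with an induction on $t$. Your pathwise reduction --- evaluating $\sum_{s<r}\nu_s\geq\frac{\alpha r+M_r+l_0-l_r}{\alpha+\beta}$ at the last bad period $\tau$, where $\nu_\tau=0$ forces the deterministic ceiling $l_\tau<l^*$, and then crediting the $t-1-\tau$ good periods after $\tau$ --- reaches the same conclusion without the induction and directly yields the high-probability statement about the realized sum $\sum_{s<t}\nu_s$, which is what (\ref{C.6}) actually uses; this is cleaner. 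Note that $\tau$ is not a stopping time, but you correctly sidestep this by passing to a maximal inequality for $\min_{0\leq r\leq t}M_r$.

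The step you leave open --- concentration when the increments $Z_s$ are unbounded above because $q\equiv\Pr(y_s=1\mid\sigma_o)$ has no equilibrium lower bound --- is a genuine difficulty, but one you have diagnosed and the paper has not: the paper's proof invokes Azuma--Hoeffding with ``$C$ the difference between the largest realization of $Z_t$ and the smallest realization of $Z_t$,'' a constant that need not exist uniformly across histories for exactly the reason you give. Your proposed repair (truncate, apply a Freedman-type bound, and note that an untruncated upward jump of size roughly $\log(1/q)$ occurs with probability at least $1-\rho$ and lifts $l$ above $l^*$ for on the order of $\log(1/q)/|\log(\rho\eta)|$ subsequent periods, all of which are good) is the right accounting, but it is only a plan. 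The dangerous event is the complementary outcome $y_s=0$, which has probability at most $\rho$ yet makes the compensated increment roughly $-(1-\rho)\log(1/q)$; you still need to show that the expected damage from these events to $\min_r M_r$ (equivalently, to the gap between the realized $l_\tau-l_0$ and its compensator) is dominated by the good periods guaranteed by the upward jumps, uniformly over how the opportunistic type's strategy allocates small values of $q$ across periods. Until that bookkeeping is written out your proof is incomplete --- though no more so than the paper's own treatment of the same point.
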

Our proof uses the Azuma-Hoeffding's inequality:
\begin{Lemma}
Let $\left\{Z_{0},Z_{1},\cdots \right\}$ be a martingale such that $|Z_{k}-Z_{k-1}|\leq c_{k}$.
For every $N \in \mathbb{N}$ and  $\epsilon_1 > 0$, we have
$$\Pr[Z_{N}-Z_{0}\geq \epsilon_1 ]\leq \exp \left(-\frac{\epsilon_1^{2}}{2\sum _{k=1}^{N}c_{k}^{2}}\right).$$
\end{Lemma}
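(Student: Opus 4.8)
The plan is to prove the inequality by the exponential moment (Chernoff) method, reducing the tail bound on the whole martingale to a per-step bound on each increment. Write $\mathcal{F}_k \equiv \sigma(Z_0,\dots,Z_k)$ for the natural filtration and $D_k \equiv Z_k - Z_{k-1}$ for the martingale differences, so that the martingale property gives $\mathbb{E}[D_k \mid \mathcal{F}_{k-1}]=0$ and the hypothesis gives $|D_k|\le c_k$. For any $s>0$, Markov's inequality applied to the nonnegative random variable $e^{s(Z_N-Z_0)}$ yields
\begin{equation*}
\Pr[Z_N - Z_0 \ge \epsilon_1] = \Pr\big[e^{s(Z_N-Z_0)} \ge e^{s\epsilon_1}\big] \le e^{-s\epsilon_1}\, \mathbb{E}\big[e^{s(Z_N-Z_0)}\big].
\end{equation*}
The task is then to bound the moment generating function $\mathbb{E}[e^{s(Z_N-Z_0)}]$ and to choose $s$ optimally.

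First I would control the moment generating function by peeling off one increment at a time. Writing $Z_N - Z_0 = \sum_{k=1}^N D_k$ and conditioning on $\mathcal{F}_{N-1}$, the tower property gives
\begin{equation*}
\mathbb{E}\big[e^{s(Z_N-Z_0)}\big] = \mathbb{E}\Big[ e^{s\sum_{k=1}^{N-1}D_k}\, \mathbb{E}\big[e^{sD_N}\mid \mathcal{F}_{N-1}\big]\Big].
\end{equation*}
The key step is Hoeffding's lemma: if $X$ has $\mathbb{E}[X]=0$ and $X \in [a,b]$ almost surely, then $\mathbb{E}[e^{sX}] \le \exp\big(s^2(b-a)^2/8\big)$. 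Applying this conditionally to $X=D_N$, which is mean zero given $\mathcal{F}_{N-1}$ and takes values in $[-c_N, c_N]$, bounds the inner conditional expectation by $\exp(s^2 c_N^2/2)$. Since this bound is deterministic, it factors out, and iterating the same argument over $k=N-1,\dots,1$ produces
\begin{equation*}
\mathbb{E}\big[e^{s(Z_N-Z_0)}\big] \le \exp\Big(\tfrac{s^2}{2}\sum_{k=1}^N c_k^2\Big).
\end{equation*}

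Combining the two displays gives $\Pr[Z_N - Z_0 \ge \epsilon_1] \le \exp\big(-s\epsilon_1 + \tfrac{s^2}{2}\sum_{k=1}^N c_k^2\big)$ for every $s>0$. Minimizing the exponent, which is a quadratic in $s$, at $s = \epsilon_1/\sum_{k=1}^N c_k^2$ yields exactly the claimed bound $\exp\big(-\epsilon_1^2/(2\sum_{k=1}^N c_k^2)\big)$. The main obstacle is Hoeffding's lemma itself, which I would prove by using convexity of $t\mapsto e^{st}$ to dominate $e^{sX}$ on $[a,b]$ by the secant line through the endpoints, taking expectations using $\mathbb{E}[X]=0$ to reduce to a deterministic function of $s$, and then showing that the logarithm of that function has second derivative bounded by $(b-a)^2/4$; a second-order Taylor expansion in $s$ about $0$ (where the function and its first derivative vanish) then delivers the $s^2(b-a)^2/8$ bound. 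Everything else is routine conditioning and a one-variable optimization.
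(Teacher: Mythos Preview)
Your proof is correct and is the standard Chernoff-method proof of the Azuma--Hoeffding inequality. The paper itself does not prove this lemma at all: it simply states it as the well-known Azuma--Hoeffding inequality and invokes it in the proof of Claim~1, so your argument supplies a proof where the paper offers none.
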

\begin{proof}[Proof of Claim 1:]
Construct a martingale process $\{\widetilde{l}_t\}_{t \in \mathbb{N}}$ recursively. Let $\widetilde{l}_0 \equiv l_0$, and for every $t \in \mathbb{N}$, let $\widetilde{l}_{t+1} \equiv \widetilde{l}_{t}+Z_t-\mathbb{E}[Z_t|\widetilde{l}_t]$. Suppose $\frac{1}{t}\mathbb{E}[\sum_{s=0}^{t-1} \nu_s |\sigma_h^*] \leq \frac{\alpha}{\alpha+\beta}-\varepsilon_1$, then
\begin{equation}
    \sum_{s=0}^{t-1} \mathbb{E}[Z_s|\sigma_h^*] \geq t\varepsilon_1 (\alpha +\beta).
\end{equation}
Therefore,
\begin{equation}\label{C.5}
    \Pr (l_t \leq l^* |\sigma_h^*)
    =\Pr \Big(\widetilde{l}_t-\widetilde{l}_0 \leq l^*-l_0-t\varepsilon_1 (\alpha+\beta)  \Big| \sigma_h^* \Big)
    \leq \exp \Big(-\frac{(l^*-l_0-t\varepsilon_1 (\alpha+\beta))^2}{2 t C^2}\Big),
\end{equation}
where $C$ is the difference between the largest realization of $Z_t$ and the smallest realization of $Z_t$. The RHS of (\ref{C.5}) vanishes to zero exponentially as $t \rightarrow +\infty$. Since $\nu_t=1$ when $l_t \geq l^*$, we know that for every $\varepsilon_0>0$, there exists $T_1 \in \mathbb{N}$, such that for every $t \geq T_1$,  
$\mathbb{E}[\sum_{s=0}^{t-1} \nu_s |\sigma_h^*] \leq t \Big( \frac{\alpha}{\alpha+\beta}-\varepsilon_1 \Big)$
implies that $\nu_t=1$ with probability at least $1-\varepsilon_0$, and by setting $\epsilon_0 < \frac{\beta}{\alpha+\beta}$ we obtain in this case that
$\mathbb{E}[\sum_{s=0}^{t} \nu_s|\sigma_h^*]-\mathbb{E}[\sum_{s=0}^{t-1} \nu_s|\sigma_h^*] \geq 1-\varepsilon_0 > \frac{\alpha}{\alpha+\beta}$. Then for every $t > T_1$, one can show by induction that $\mathbb{E}[\sum_{s=0}^{t-1} \nu_s|\sigma_h^*] \ge (t-T_1-1)(\frac{\alpha}{\alpha + \beta} - \varepsilon_1)$, and the claim follows by choosing any $\epsilon > \epsilon_1 > 0$, and $T \ge T_1 + 1 + \frac{\alpha}{(\epsilon - \epsilon_1)(\alpha + \beta)}$.
\end{proof}
We use the following well-known equation:
\begin{equation}\label{C.6}
 \mathbb{E}[   \sum_{t=0}^{\infty} (1-\delta)\delta^t \nu_t
 |\sigma_h^*]
 =(1-\delta)^2 \sum_{t=0}^{+\infty} \delta^t \underbrace{\sum_{s=0}^{t} \mathbb{E}[\nu_s|\sigma_h^*]}_{\geq (\frac{\alpha}{\alpha+\beta}-\varepsilon)(t+1)\textrm{ with probability close to 1}}.
\end{equation}
Claim 1 and (\ref{C.6}) imply that for every $\widehat{\varepsilon} >0$, there exists $\overline{\delta} \in (0,1)$, such that for every $\delta \in (\overline{\delta},1)$, we have
\begin{equation}\label{C.4}
\mathbb{E}[\sum_{t=0}^{\infty} (1-\delta)\delta^t \nu_t |\sigma_h^*] \geq \frac{\alpha}{\alpha+\beta}-\widehat{\varepsilon}.
\end{equation}
Recall that $v^*$ is player $1$'s expected pure Stackelberg payoff. Without loss of generality, we normalize player $1$'s worst stage-game payoff to $0$. If $\nu_t=1$, then the honest type's expected payoff from
announcing their Stackelberg action in every state is at least $(1-\rho-\eta) v^*$. If $\nu_t=0$, then
the honest type's expected payoff is at least $0$. Pick $\overline{\delta} \in (0,1)$ such that $\mathbb{E}[\sum_{t=0}^{\infty} (1-\delta)\delta^t \nu_t] \geq \frac{\alpha}{\alpha+\beta}-\widehat{\varepsilon}$,
and pick $\rho$ small enough such that $\frac{\alpha}{\alpha+\beta}$ is greater than $1-\widehat{\varepsilon}$, the honest type's payoff is at least 
\begin{equation*}
   (1-2 \widehat{\varepsilon}) (1-\eta-\rho) v^*.
\end{equation*}
There exists $c \in (0,1)$ such that the above expression  is greater than $v^*-\varepsilon$ when $\eta$ and $\widehat{\varepsilon}$ are small enough, and $\rho \leq c \eta$.

\end{spacing}
\newpage
\bibliography{bib}
\nocite{*}

\end{document}